\tikzstyle{block}=[draw opacity=0.7,line width=1.4cm]
\DeclareMathAlphabet{\mathpzc}{OT1}{pzc}{m}{it}
\definecolor{CranJ}{cmyk}{0,0.69,0.54,0.04} 
\definecolor{PinkJ}{cmyk}{0,0.71,0.43,0.12} 
\definecolor{Cran}{cmyk}{0,0.73,0.41,0.29} 
\definecolor{VRed}{cmyk}{0,0.75,0.25,0.2} 
\definecolor{ORed}{cmyk}{0,0.75,0.75,0} 
\definecolor{CBlue}{cmyk}{1,0.25,0,0} 
\newcommand{\Hf}{\operatorname{H}} 
\newcommand{\Df}{\operatorname{\vect{\delta}}} 
\newcommand{\ravg}{\mathsf{avg}^{\textup{a}}} 
\newcommand{\dravg}{\dot{\tilde{\mathsf{avg}}}^{\textup{a}}} 
\newcommand{\Dravg}{\Delta\vect{\mathsf{avg}}^{\textup{a}}_k} 
\newcommand{\ee}{\operatorname{e}}
\newcommand{\VV}{\mathcal{V}}
\newcommand{\EE}{\mathcal{E}}
\newcommand{\GG}{\mathcal{G}}
\newcommand{\lL}{\vectsf{L}}
\newcommand{\rR}{\vect{\mathfrak{R}}}
\newcommand{\rr}{\vect{\mathfrak{r}}}
\newcommand{\real}{{\mathbb{R}}}
\newcommand{\reals}{{\mathbb{R}}}
\newcommand{\integer}{{\mathbb{Z}}}
\newcommand{\integernonnegative}{{\mathbb{Z}_{\geq0}}}
\newcommand{\realpositive}{{\mathbb{R}}_{>0}}
\newcommand{\realnonnegative}{{\mathbb{R}}_{\ge 0}}
\newcommand{\Co}{\operatorname{Co}}
\newcommand{\until}[1]{\in\{1,\dots,#1\}}
\newcommand{\vect}[1]{\boldsymbol{\mathbf{#1}}}
\newcommand{\vectsf}[1]{\vect{\mathsf{#1}}}
\newcommand{\dvect}[1]{\dot{\vect{#1}}}
\newcommand{\Diag}[1]{\operatorname{Diag}(#1)}
\newcommand{\boxend}{\hfill \ensuremath{\Box}}
\newcommand{\yifan}[1]{{\color{blue}#1}}
\newtheorem{thm}{Theorem}
\newtheorem{lem}[thm]{Lemma}
\newtheorem{prob}{Problem}
\newtheorem{assump}{Assumption}
\newcommand{\oprocendsymbol}{\hbox{$\bullet$}}
\newcommand{\oprocend}{\relax\ifmmode\else\unskip\hfill\fi\oprocendsymbol}
\title{Dynamic Active Average Consensus and its Application in Containment Control
} 
\author{Yi-Fan Chung and Solmaz S. Kia, \emph{Senior Member, IEEE} 
\thanks{The authors are with the Department of Mechanical and Aerospace Engineering, University of California, Irvine, Irvine, CA 92697, {\tt \{yfchung,solmaz\}@uci.edu}. 
 This work is supported by NSF award IIS-SAS-1724331. 
}}
\begin{document}
\maketitle
\begin{abstract}
This paper proposes a continuous-time dynamic active weighted average consensus algorithm in which the agents can alternate between active and passive modes depending on their ability to access to their reference input. The objective is to enable all the agents, both active and passive, to track the weighted average of the reference inputs of the active agents. The algorithm is modeled as a switched linear system whose convergence properties are carefully studied considering the agents' piece-wise constant access to the reference signals and possible piece-wise constant weights of the agents. We also study the discrete-time implementation of this algorithm. Next, we show how a containment control problem, in which a group of followers should track the convex hull of a set of observed leaders, can be cast as an active average consensus problem, and solved efficiently by our proposed dynamic active average consensus algorithm. Numerical examples demonstrate our results.
\end{abstract}

\begin{IEEEkeywords}                           
Multi-agent coordination; Average consensus; containment control; switched systems;
\end{IEEEkeywords}   


\section{Introduction}
We propose a distributed solution for the dynamic active weighted average consensus problem and study its use in solving a distributed containment control problem. In dynamic active weighted average consensus problem, at any time, only a subset of the agents are active, meaning that only a subset of agents collects measurements. The objective then is to enable all the agents, both active  and passive, to obtain the weighted average of the collected measurements without knowing the set of active agents. The well-known average consensus problem, extensively studied in the literature for both static~\cite{ROS-JAF-RMM:07}
and dynamic~\cite{SSK-BVS-JC-RAF-KML-SM:19} 
reference signals, is in fact a special case of this problem with all the agents being active at all times and employing an equal weight of one.

The active weighted average consensus problem can be viewed as a \emph{weighted average consensus problem}~\cite{ROS-RMM:04}, in which the weights are $0$ for passive agents. However, the solutions for weighted average consensus (see e.g., \cite{ROS-RMM:04,CJ::06,SY::16}) use  the notation of the `equivalent' Laplacian matrix, which is the multiplication of the inverse of the weight matrix and the Laplacian matrix. 
Therefore, the weights should be non-zero, and thus these solutions cannot solve the active average consensus problem. Solutions specifically addressing the active (weighted) average consensus problem are proposed in~\cite{TY-JDP:14,JDP-TY-GC-SK:15,JDP-TY-JS-ELP:20}, but, they require both the reference inputs and their derivatives to be bounded to guarantee bounded error tracking. \cite{TY-JDP:14,JDP-TY-GC-SK:15}~also assume that the active and passive role of the agents are fixed and agents cannot alternative between modes. On the other hand,~\cite{JDP-TY-JS-ELP:20} allows the agents to change mode but requires that this change to be smooth.

In this paper, we propose a continuous-time solution for dynamic active weighted average consensus over connected graphs that requires only the rate of the change of the reference inputs to be bounded. Also, the agents can switch between active and passive modes or switch their weights instantaneously, as long as a dwell time exists between the switching incidences. Abrupt switching is usually the case for practical problems where agents are observing dynamic activities that can enter or leave the observation zone of the agents and thus change the agents' role from active to passive or vice versa in a non-smooth fashion. We model our algorithm as a switched linear system and study its convergence properties carefully by taking into account the piece-wise constant weights and access of the agents to the reference signals. Our study employs the concept of distributional derivatives~\cite{KRP::1998} to model the derivative of piece-wise continuous functions and characterize the transient error at the switching times.  

\begin{figure}
	\centering
	\includegraphics[trim= 0 0pt 0 0pt,clip,width=0.4\textwidth]{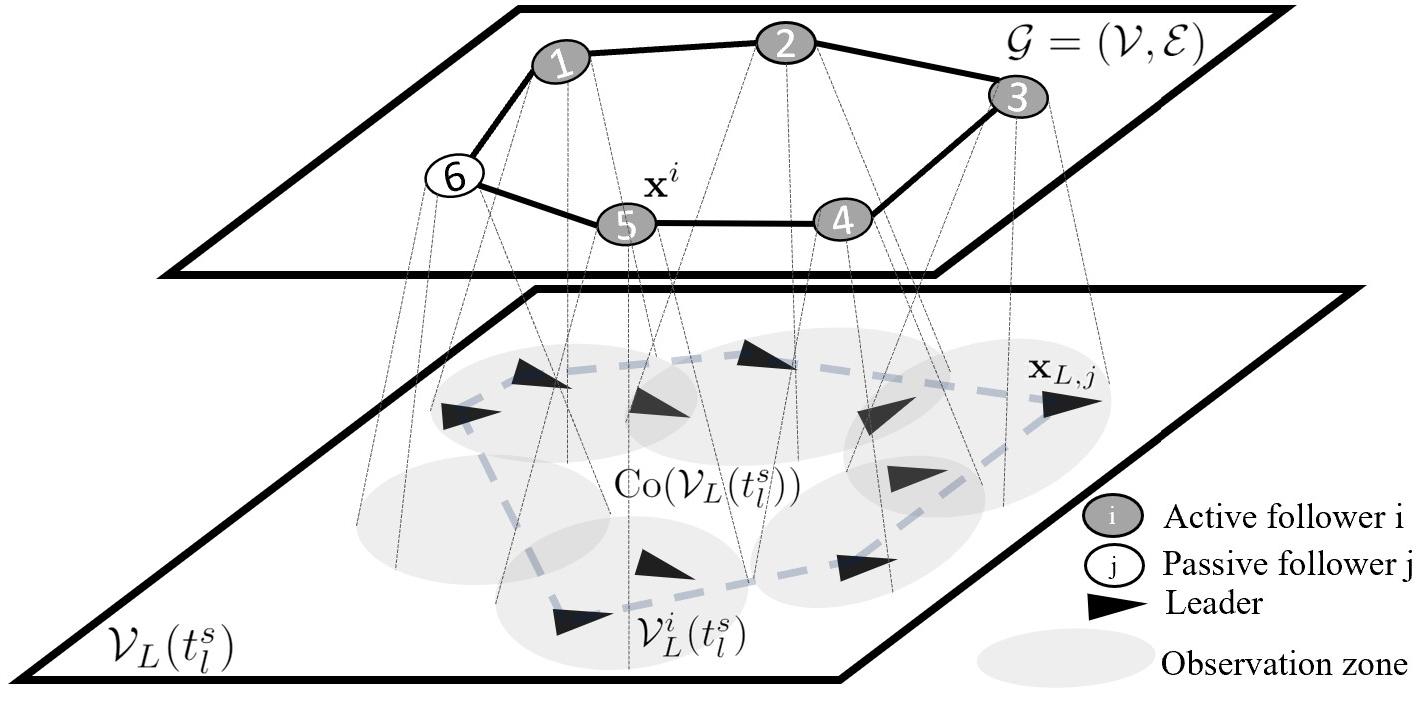}
	\caption{{\small A containment control scenario where a set of six followers should track the convex hull of a set of the dynamic leaders that they observe:  
	Followers $\VV_{\text{a}}=\{1,\cdots,5\}$ are active agents that each observes a subset of the leaders, while follower $6$ is the passive agent that should still follow the convex hull of the leaders despite having no measurement. Lemma~\ref{lem::cov_noverlap} below shows that the average of the geometric centers of the observed leaders at each active agent is a point in the convex hull of the leaders. Thus, this containment problem can be formulated as an active average consensus~problem.}}\vspace{-0.1in}
	\label{fig::containment_scenario}
\end{figure}
Our next contribution in this paper is studying the discrete-time implementation of our proposed dynamic active weighted average consensus algorithm and using it to solve a containment control problem where a group of followers should track the convex hull of a set of leaders that they observe. We show that the average of the geometric centers  of the observed leaders at each active agent is a point in the convex hull of the leaders. Thus, the containment problem can be formulated and solved as an active average consensus~problem, see Fig.~\ref{fig::containment_scenario}. 
Continuous-time solutions for containment problems can be found in~\cite{JM-FT-GE-EM-BA:08,WX-LS-SP:14,HL-GX-LW:12}. But, the requirement for continuous inter-agent information sharing can be of concern for practical problems where agents communication bandwidth is limited. Discrete-time containment control solutions where agents communicate with each other in a finite rate are given in~\cite{LG-GFT-RC:13,KZ-SJM-DW:16}. To provide perfect tracking,~\cite{JM-FT-GE-EM-BA:08,WX-LS-SP:14,HL-GX-LW:12,LG-GFT-RC:13,KZ-SJM-DW:16} assume that the leaders are static or if they are dynamic they either follow a certain dynamics that is known to the followers or the leaders' motions have to be coordinated with the followers. 
In this paper, considering tracking problems where the state of the leaders is only measured online,
we make no assumption about the dynamics of the leaders except that the change of the states of the leaders is bounded. This relaxation however, as known in dynamic consensus literature, is attained by trading off perfect tracking, as the online time-varying information takes some time to propagate through the network~\cite{SSK-BVS-JC-RAF-KML-SM:19}. A~preliminary version of our work appeared in~\cite{CY-KS:19}. There, we used two parallel conventional dynamic average consensus algorithms, one to generate the sum of the measurements divided by the size of the network and the other to obtain the sum of the active agents divided by the size of the network. Then, the average of the active measurements is obtained from dividing the output of the first algorithm by that of the second one. The current manuscript is offering a computationally more efficient algorithm, which has a lower communication complexity and avoids zero-crossing problem observed in our initial work~\cite{CY-KS:19} for its approach to solve dynamic active average consensus~problem.

\section{Notations and Preliminaries}\label{sec::notaion_prelim_relay}
We let $\reals$, $\realpositive$, $\realnonnegative$, $\mathbb{Z}$, $\mathbb{Z}_{> 0}$ and $\mathbb{Z}_{\geq 0}$
denote the set of real, positive real, non-negative real, integer, positive integer, and non-negative integer, respectively. For $\vect{s}\in\reals^d$,
$\|\vect{s}\|=\sqrt{\vect{s}^\top\vect{s}}$ denotes the standard
Euclidean norm. We let
$\vect{1}_n$ (resp. $\vect{0}_{n}$) denote the vector of $n$ ones
(resp. $n$ zeros), and $\vect{I}_n$ denote the $n\times n$ identity
matrix. When clear from the context, we do not
specify the matrix dimensions. 
$\Hf(t)=\begin{cases}0,& t<0 \\ 1,& t \geq 0 \end{cases}$ is the \emph{Heaviside step} function.  $\Df(t)=\begin{cases}\infty,& t=0 \\ 0,& t \neq 0 \end{cases}$ such that $\int_{-\infty}^{\infty}\Df(t)\text{d}t=1$ is the \emph{Dirac Delta} function. In~a network of $N$ agents,
 the aggregate vector of local variables $p^i\in\real$, $i\until{N}$, is denoted by $\vect{p} = ({p}^1,\dots,{p}^N)^\top\in\reals^N$. 
 
 Consider the piece-wise continuous~function
\begin{align}\label{eq::pwc_v}
   \vect{v}(t)=\left\{\begin{smallmatrix}\vect{v}_0(t), \quad&t_0\leq t < t_1,\\
    \vect{v}_1(t), \quad&t_1\leq t < t_2,\\
    \vdots\\
    \vect{v}_{\bar{k}}(t), \quad&t_{\bar{k}}\leq t
    \end{smallmatrix}\right.,\quad
\end{align}
where $\vect{v}_i \in \mathcal{C}^1,i\in\{1,\cdots,\bar{k}\}$. Using the Heaviside step function,~\eqref{eq::pwc_v} reads as $\vect{v}(t)=\vect{v}_0+\sum_{k=1}^{\bar{k}}(\vect{v}_k-\vect{v}_{k-1}) \Hf(t-t_{k})$. 
Then, following~\cite{KRP::1998}, the distributional derivative of $\vect{v}(t)$ is
\begin{align}\label{eq::v_derivative}
    \frac{\text{d}}{\text{d}t}\vect{v}&=\vect{\dot{\tilde{v}}}+\sum\nolimits_{k=1}^{\bar{k}}(\vect{v}(t_k^+)-\vect{v}(t_k^-))
    \Df(t\!-\!t_{k}), 
\end{align}
where $\vect{\dot{\tilde{v}}}\!=\!\dvect{v}_0\!+\!\sum_{k=1}^{\bar{k}}(\dvect{v}_k-\dvect{v}_{k-1})\Hf(t-t_{k})$ or equivalently
\begin{align*}
    \vect{\dot{\tilde{v}}}=\left\{\begin{smallmatrix}\dvect{v}_0(t), \quad&t_0\leq t < t_1,\\
    \dvect{v}_1(t), \quad&t_1\leq t < t_2,\\
    \vdots\\
    \dvect{v}_{\bar{k}}(t), \quad&t_{\bar{k}}\leq t.
    \end{smallmatrix}\right. 
    \end{align*}
We assume that the piece-wise continuous signals are right-continuous, i.e. $\vect{v}(t_k)=\vect{v}(t_k^+)$. 
Hereafter, we use the notation ` $\vect{\dot{\tilde{\empty}}}$\,' to represent $ \vect{\dot{\tilde{v}}}(t)=\begin{cases}\dvect{v}(t)&t\neq t_k\\
    \dvect{v}(t_k^+)&t= t_k\end{cases}$.



An undirected \emph{graph} is a triplet $\GG = (\VV ,\EE,
\vect{\sf{A}})$,~where $\VV=\{1,\dots,N\}$ is the \emph{node set} and
$\EE \subseteq \VV\times \VV$ is the \emph{edge set}, and $\vect{\sf{A}}\in\real^{N\times N}$ is a \emph{adjacency}
matrix such that $ \sf{a}_{ij} =\sf{a}_{ji}>0$ if $(i, j) \in\EE$ and $
\mathsf{a}_{ij} = 0$, otherwise.  
An edge $(i, j)$ from $i$ to $j$ means that agents $i$ and $j$ can communicate. 
A \emph{connected graph} is an undirected graph in which for
every pair of nodes there is a path connecting them. The
\emph{degree} of a node $i$ is $\mathsf{d}^i =\Sigma^N_{j =1} \mathsf{a}_{ij}$. 
The \emph{Laplacian} matrix is $\lL =
\vect{\mathsf{D}} - \vect{\mathsf{A}}$, where
$\vect{\mathsf{D}} =
\Diag{\mathsf{d}^1,\cdots, \mathsf{d}^N} \in
\reals^{N \times N}$. For connected graphs, $\lL\vect{1}_N=\vect{0}$ and $\vect{1}_N^T\lL=\vect{0}$. Moreover,  $\vectsf{L}$ has one eigenvalue $\lambda_1\!=\!0$, and the rest of the eigenvalues $\{\lambda_i\}_{i=2}^N$ are positive. $\vect{T}=[\rr\quad\rR]\in\real^{N\times N}$ is an orthonormal matrix, where $\rr=\frac{1}{\sqrt{N}}\vect{1}_N$ and $\rR\!\in\!\real^{N\times(N-1)}$ is any matrix that makes $\vect{T}^\top\vect{T}=\vect{T}\vect{T}^\top=\vect{I}$. For a connected graph,
    $\vect{T}^\top\lL\vect{T}\!=\!\left[\begin{smallmatrix}
    0&\vect{0}\\
    \vect{0}&~\lL^{+}
    \end{smallmatrix}\right]$, where $\lL^{+}=\rR^\top\lL\rR$,
 is a positive definite matrix with eigenvalues $\{\lambda_i\}_{i=2}^N\subset\real_{>0}$.

\begin{lem}\label{lem::B+L Hurwitz}
Suppose the nonzero matrix $\vect{E}\in\real^{N\times N}$ is a diagonal matrix whose diagonal elements are either $0$ or of positive real numbers, and $\lL$ is the Laplacian matrix of a connected~graph.
Then, $-(\vect{E}+\lL)$ is Hurwitz.
\end{lem}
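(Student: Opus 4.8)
The plan is to exploit symmetry and reduce the Hurwitz claim to a positive-definiteness statement. Since $\lL$ is the Laplacian of an undirected graph it is symmetric, and $\vect{E}$ is diagonal hence symmetric, so $\vect{M}:=\vect{E}+\lL$ is symmetric and all of its eigenvalues are real. Consequently $-\vect{M}$ is Hurwitz if and only if every eigenvalue of $\vect{M}$ is strictly positive, i.e.\ $\vect{M}$ is positive definite. I would therefore establish that $\vect{x}^\top\vect{M}\vect{x}>0$ for every nonzero $\vect{x}\in\real^N$.

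First I would record the two quadratic forms separately. For the Laplacian, $\vect{x}^\top\lL\vect{x}=\tfrac12\sum_{(i,j)\in\EE}\mathsf{a}_{ij}(x_i-x_j)^2\ge 0$, so $\lL$ is positive semidefinite; and for the diagonal part, $\vect{x}^\top\vect{E}\vect{x}=\sum_{i=1}^N E_{ii}\,x_i^2\ge 0$, since each diagonal entry $E_{ii}\ge 0$. Adding these shows $\vect{M}$ is positive semidefinite, so the only thing that can fail is the strict inequality; it suffices to prove that $\vect{x}^\top\vect{M}\vect{x}=0$ forces $\vect{x}=\vect{0}$.

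Suppose $\vect{x}^\top\vect{M}\vect{x}=0$. Being a sum of two nonnegative terms, both must vanish: $\vect{x}^\top\lL\vect{x}=0$ and $\vect{x}^\top\vect{E}\vect{x}=0$. The first equation, together with connectedness of the graph (for which $\ker\lL=\operatorname{span}\{\vect{1}_N\}$, as recalled above), forces $\vect{x}=c\,\vect{1}_N$ for some scalar $c$. The second equation forces $x_i=0$ at every index $i$ with $E_{ii}>0$; because $\vect{E}$ is assumed nonzero there is at least one such index $i^\star$, whence $c=x_{i^\star}=0$ and thus $\vect{x}=\vect{0}$.

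This argument yields $\vect{M}\succ 0$, so every eigenvalue of $-\vect{M}=-(\vect{E}+\lL)$ is negative and the matrix is Hurwitz. The one place that genuinely requires care — and the only step that uses both hypotheses simultaneously — is this final intersection argument: connectedness guarantees that the Laplacian null space is exactly the all-ones direction, while the nonzeroness of $\vect{E}$ guarantees that this direction is not annihilated by $\vect{E}$. Were $\vect{E}$ the zero matrix the conclusion would fail, since $\vect{1}_N$ would remain in $\ker\vect{M}$ and $0$ would persist as an eigenvalue; this is precisely why that hypothesis is invoked.
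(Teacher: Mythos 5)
Your proof is correct. It takes a genuinely different (and more elementary) route than the paper: you exploit the symmetry of $\vect{E}+\lL$ to reduce ``Hurwitz'' to ``positive definite,'' and then settle positive definiteness by the quadratic-form argument $\vect{x}^\top(\vect{E}+\lL)\vect{x}=0 \Rightarrow \vect{x}=c\vect{1}_N \Rightarrow c=0$. The paper instead wraps the very same algebraic kernel in a dynamical-systems argument: it considers the ODE $\dvect{x}=-(\vect{E}+\lL)\vect{x}$ with Lyapunov function $V=\tfrac12\vect{x}^\top\vect{x}$, computes $\dot V=-\vect{x}^\top\vect{E}\vect{x}-\vect{x}^\top\lL\vect{x}\le 0$, and invokes a LaSalle-type invariance theorem from Khalil to conclude exponential stability, hence Hurwitzness. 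Your version buys self-containedness (no stability-theory citations) and actually proves something slightly stronger, namely that $\vect{E}+\lL$ is symmetric positive definite with real positive spectrum; you also make explicit the one step the paper leaves implicit, that nonzeroness of $\vect{E}$ is what kills the $\alpha\vect{1}_N$ direction. What the paper's packaging buys is continuity with the rest of the analysis: the identical Lyapunov/LaSalle machinery is reused in Lemma~\ref{lem::CT_IF_algorithm_internal} for the switched internal dynamics, whose subsystem matrices $\overline{\vect{A}}_p$ are \emph{not} symmetric, so the eigenvalue shortcut you rely on would not be available there; setting up the technique here makes the later proof a natural extension. Note also that your argument is tied to the undirected (symmetric Laplacian) setting, whereas the Lyapunov route is the one that would survive a generalization to, e.g., weight-balanced directed graphs.
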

\begin{proof}
Consider the system $\dvect{x}=-(\vect{E}+\lL)\vect{x}$. Now consider Lyapnov function $V=\frac{1}{2}\vect{x}^\top \vect{x}$. Then, $\dot{V}=-\vect{x}^\top \vect{E}\vect{x}-\vect{x}^\top\lL\vect{x}\leq0,$
because $-\vect{x}^\top \vect{E}\vect{x}\leq 0$ and $-\vect{x}^\top\lL\vect{x}\leq 0$. However, $\dot{V}\equiv 0$ happens when $-\vect{x}^\top \vect{E}\vect{x}=0$ and $-\vect{x}^\top\lL\vect{x}=0$. But, since $-\vect{x}^\top\lL\vect{x}=0$ if and only if $\vect{x}=\alpha \vect{1}$, $\alpha\in\real$ then $\dot{V}\equiv 0$ if $\vect{x}\equiv \vect{0}$.
Therefore, invoking~\cite[Theorem 4.11]{khalil:02}, we conclude that the system $\dvect{x}=-(\vect{E}+\lL)\vect{x}$ is uniformly exponentially stable.  Thus, $-(\vect{E}+\lL)$ is Hurwitz.
\end{proof}
\section{Problem Definition}\label{sec::problem_relay}
Consider a network of $N$  single integrator agents $\dot{x}^i=u^i$, $i\in\VV$,
 interacting over a connected undirected graph $\GG$. Suppose each agent $i\in\VV$ has access to a measurable locally essentially bounded reference signal  $\mathsf{r}^i:\realnonnegative\to\real$ in a possibly intermittent fashion. For every agent $i\in\VV$, we let $\eta^i(t)$ be the mode and weight indicator function for the agent $i\in\VV$, which is in $\realpositive$ if agent $i$ is active and has access to $\mathsf{r}^i(t)$ at time $t\in\real_{\geq0}$, and $0$ otherwise.
%
Let $\VV_{\textup{a}}(t)\subset\VV$ be the set of active agents at time $t\in\real_{\geq0}$, i.e., $\VV_{\textup{a}}(t)=\big\{i\in\VV\,\big|\,\eta^i(t)>0\big\}$. In what follows, we assume that $\eta^i(t)$ and $|\VV_{\textup{a}}(t)|$ are piece-wise constant functions of time, and $\VV_{\textup{a}}(t)\not=\emptyset$ for all  $t\in\real_{\geq0}$. We refer to an agent in
$\VV\backslash\VV_\text{a}(t)$ 
as the passive agent at time $t$.
\begin{prob}[Active weighted average consensus problem]\label{prob::continu}{\rm
The active average consensus problem over $\GG$ is defined as designing a distributed control input $u^i$ such that the agreement state $x^i(t)\in\real$ of every agent $i\in\VV$ tracks 
\begin{equation*}\quad\quad\mathsf{avg}^{\textup{a}}(t)=\frac{\sum_{i=1}^N{\eta^i(t)\,\mathsf{r}^i}(t) }{\sum_{i=1}^N \eta^i(t)}.
\quad\quad\quad\boxend
\end{equation*}} 
\end{prob}
In what follows, we first propose a distributed continuous-time algorithm to solve Problem~\ref{prob::continu}. Then, we present a  discrete-time implementation of this active weighted average consensus algorithm in which the agents sample the reference inputs with a rate of $1/\delta_s$ in a zero-order fashion. Lastly, we show how a containment problem can be cast as dynamic active (homogeneously weighted) average consensus problem and solved using our proposed algorithm. 

\section{Continuous-Time Dynamic Active Average Consensus}\label{sec::CT_relay consensus algorithm}
Our solution to solve Problem~\ref{prob::continu} over a connected undirected graph $\GG$ is 
\begin{subequations}\label{eq:CT_IF_algorithm}
\begin{align}
 \dot{x}^i(t)=\,&-\eta^i(t)(x^i(t)-\mathsf{r}^i(t))-\sum\nolimits_{j=1}^N\mathsf{a}_{ij}(x^i(t)-x^j(t))\nonumber\\
 &-\sum\nolimits_{j=1}^N\mathsf{a}_{ij}(v^i(t)-v^j(t))+\eta^i(t)\dot{\mathsf{r}}^i(t),\\
 \dot{v}^i(t)&=\sum\nolimits_{j=1}^N\mathsf{a}_{ij}(x^i(t)-x^j(t)),
\end{align}
\end{subequations}
with $x^i(0),v^i(0)\in\reals$, $i\in\VV$. Here, $v^i(t)\in\real$ is an internal state that acts as an integral action. Next, we study the convergence properties of~\eqref{eq:CT_IF_algorithm} by modeling it as a switched system and analyzing the collective response of the agents.  In what follows, we let  $\vect{E}(t)=\Diag{\eta^1(t),\cdots,\eta^N(t)}$.
$\vect{E}(t)$ can be considered as switching in the class of non-zero diagonal matrices $\{\vect{E}_p\}_{p\in\mathcal{P}}$, $\mathcal{P}$ is the index set, each of which has diagonal elements being either positive real or $0$.
That is $\vect{E}(t)=\vect{E}_{\sigma(t)}\not=\vect{0}$ with the switching signal  $\sigma(t):\realnonnegative\to\mathcal{P}$. We let $N_\sigma(0,t)$ denote the number of switchings of $\sigma(t)$ on the interval $[0,t)$. In our problem of interest, the following common assumption for switch linear systems holds \cite{ZL-GH:10,HJP-MAS:19}.
\begin{assump}\label{assump::dwell time exists}\rm{
 There exist some $N_0\in\mathbb{Z}_{\geq 0}$ and $\tau_{\text{D}}\in\real_{> 0}$ such that,  $N_\sigma(0,t)\leq N_0+\frac{t}{\tau_{\text{D}}}$, $t\in\real_{>0}$, where $\tau_{\text{D}}$ is called the average dwell time and $N_0$ is the chatter bound.}
 \boxend
\end{assump}
We let $\vect{\ravg}=\ravg \vect{1}$, 
$\Dravg=\vect{\ravg}(t_k^+)-\vect{\ravg}(t_k^-)$, $\vect{w}(t)=\vect{E}_{\sigma(t)}(\vectsf{r}(t)-\vect{\ravg}(t))$, and $\Delta\vect{w}_k=\vect{w}(t_k^+)-\vect{w}(t_k^-)$, where $t_k$, $k\in\mathbb{Z}_{\geq0}$ is the $k$th switching time of the switching signal $\sigma(t)$. Throughout this paper we assume $t_0=0$. Lastly, given a time $t\in\real_{\geq0}$, $\bar{k}\in\integer_{\geq0}$ is the largest integer such that $t_{\bar{k}}\leq t$.

For convenience in the correctness analysis of algorithm~\eqref{eq:CT_IF_algorithm}, we use the change of variables $\overline{\vect{e}}=\vect{T}^\top(\vect{x}-\vect{\ravg})$, $\vect{q}=[q_1\,\,\,\,\vect{q}_{2:N}^\top]^\top=\vect{T}^\top(\lL\vect{v}-\vect{w})$ to write the equivalent compact form of ~\eqref{eq:CT_IF_algorithm} as 
\begin{subequations}
 \begin{align}
  \dot{q}_1=0,\\
\!\!\!   \left[\begin{smallmatrix}
    \dot{\overline{\vect{e}}}\\ \dvect{q}_{2:N}
    \end{smallmatrix}\right]=\,&\overline{\vect{A}}_{\sigma(t)}\! \left[\begin{smallmatrix}
    \overline{\vect{e}}\\ \vect{q}_{2:N}
    \end{smallmatrix}\right]+\overline{\vect{B}} \left[\begin{smallmatrix}\vect{E}\,\dot{\vectsf{r}}-\vect{\dravg}\\ -\vect{\dot{\tilde{w}}}\end{smallmatrix}\right]-\nonumber\\
    &\overline{\vect{B}}\sum\nolimits_{k=1}^{\bar{k}} \left[\begin{smallmatrix}\Dravg\\ \Delta\vect{w}_k\end{smallmatrix}\right]\Df(t-t_k).\label{eq::CT_IF_compact form}
\end{align}
\end{subequations}
where $\overline{\vect{A}}_{\sigma(t)}=\left[\begin{smallmatrix}
    -\vect{T}^\top(\vect{E}_{\sigma(t)}+\lL)\vect{T} &~~ -\left[\begin{smallmatrix}\vect{0}\\\vect{I}_{N-1} \end{smallmatrix}\right] \\
    \left[\begin{smallmatrix}\vect{0} & ~~\lL^{+}\lL^{+} \end{smallmatrix}\right]          & \vect{0}
    \end{smallmatrix}\right]$ and $\overline{\vect{B}}=\left[\begin{smallmatrix} 
    \vect{T}^\top & \vect{0}\\ \vect{0} &  \rR^\top
    \end{smallmatrix}\right]$.
Here, we used the facts that $\rr^\top\vect{\dot{\tilde{w}}}=0$ and $\rr^\top\Delta\vect{w}_k=0$. Also, we used $\rR\rR^\top\lL=\lL$ to write $\rR^\top\lL\lL\rR=\lL^{+}\lL^{+}$. Lastly, note that since $\vect{\ravg}$ and $\vect{w}$ are piece-wise continuous functions, we used~\eqref{eq::v_derivative} to compute their derivatives that appear in $ \dot{\overline{\vect{e}}}$ and $\dvect{q}$. Using standard results for linear time-varying systems we can write
    \begin{align}\label{eq::traj_compact}
  \left[\begin{smallmatrix}
    {\overline{\vect{e}}}(t)\\ \vect{q}_{2:N}(t)
    \end{smallmatrix}\right]\!=\,&\Phi(t,0)\!\left[\begin{smallmatrix}\overline{\vect{e}}(0)\\ \vect{q}_{2:N}(0)
    \end{smallmatrix}\right]\!\!+\!\!\int_0^t \!\!\!\Phi(t,\tau)\overline{\vect{B}} \Big(\! \left[\begin{smallmatrix}\vect{E}\,\dot{\vectsf{r}}-\vect{\dravg}\\ -\vect{\dot{\tilde{w}}}\end{smallmatrix}\right]\nonumber\\
    &-\sum\nolimits_{k=1}^{\bar{k}} \left[\begin{smallmatrix}\Dravg\\ \Delta\vect{w}_k\end{smallmatrix}\right]\Df(\tau-t_k) \Big)\text{d}\tau,
\end{align}
where $\Phi(t,\tau)$ is the transition matrix of linear system~\eqref{eq::CT_IF_compact form}. 
The next result shows that the internal dynamics of~\eqref{eq::CT_IF_compact form} is uniformly exponentially stable. Therefore, there always exists $\kappa_s,\lambda_s$ such that
\begin{align}\label{eq::phi}\left\|\Phi(t,\tau)\right\|\leq \kappa_s \textup{e}^{-\lambda_s (t-\tau)}, ~~ t\geq \tau\geq0.\end{align}
\begin{lem}\label{lem::CT_IF_algorithm_internal}
Let $\GG$ be a connected undirected graph. Then, every subsystem matrix $\overline{\vect{A}}_p$, $p\in\mathcal{P}$ of~\eqref{eq::CT_IF_compact form} is Hurwitz. Furthermore, under Assumption~\ref{assump::dwell time exists} the internal dynamics of~\eqref{eq::CT_IF_compact form} is uniformly exponentially stable, i.e.,~\eqref{eq::phi} holds.
\end{lem}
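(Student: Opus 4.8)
The plan is to treat the two assertions in turn. For the per-mode claim I would first record the one structural fact that drives everything: $\vect{M}_p:=\vect{T}^\top(\vect{E}_p+\lL)\vect{T}$ is symmetric positive definite. Indeed $\vect{E}_p+\lL$ is symmetric, Lemma~\ref{lem::B+L Hurwitz} gives that $-(\vect{E}_p+\lL)$ is Hurwitz, and for a symmetric matrix this is the same as $\vect{E}_p+\lL\succ\vect{0}$; orthonormality of $\vect{T}$ then preserves definiteness. Introducing the full-column-rank matrix $\vect{S}\in\real^{N\times(N-1)}$ with $\vect{S}^\top=[\,\vect{0}\ \ \vect{I}_{N-1}\,]$, the homogeneous part of~\eqref{eq::CT_IF_compact form} reads $\dot{\overline{\vect{e}}}=-\vect{M}_p\overline{\vect{e}}-\vect{S}\vect{q}_{2:N}$ and $\dvect{q}_{2:N}=\lL^{+}\lL^{+}\vect{S}^\top\overline{\vect{e}}$.

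Then I would test the candidate $V=\tfrac12\overline{\vect{e}}^\top\overline{\vect{e}}+\tfrac12\vect{q}_{2:N}^\top(\lL^{+}\lL^{+})^{-1}\vect{q}_{2:N}$, which is positive definite since $\lL^{+}\succ\vect{0}$. The weighting $(\lL^{+}\lL^{+})^{-1}$ is chosen exactly so that the cross term $\vect{q}_{2:N}^\top(\lL^{+}\lL^{+})^{-1}\lL^{+}\lL^{+}\vect{S}^\top\overline{\vect{e}}$ cancels $\overline{\vect{e}}^\top\vect{S}\vect{q}_{2:N}$, leaving $\dot{V}=-\overline{\vect{e}}^\top\vect{M}_p\overline{\vect{e}}\le 0$. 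Because $\vect{M}_p\succ\vect{0}$, one has $\dot{V}=0$ only on $\{\overline{\vect{e}}=\vect{0}\}$, and on this set invariance forces $\vect{S}\vect{q}_{2:N}=\vect{0}$, hence $\vect{q}_{2:N}=\vect{0}$ as $\vect{S}$ has full column rank. LaSalle's invariance principle then gives global asymptotic stability of each mode, and for a linear system this is equivalent to $\overline{\vect{A}}_p$ being Hurwitz.

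For the switched claim I would upgrade to strict quadratic certificates: Hurwitzness of each $\overline{\vect{A}}_p$ makes the Lyapunov equation $\overline{\vect{A}}_p^\top\vect{P}_p+\vect{P}_p\overline{\vect{A}}_p=-\vect{I}$ solvable with a unique $\vect{P}_p\succ\vect{0}$, which I would rescale to a common rate $\lambda_s$ so that $\overline{\vect{A}}_p^\top\vect{P}_p+\vect{P}_p\overline{\vect{A}}_p\preceq-2\lambda_s\vect{P}_p$, and fix a compatibility factor $\mu\ge1$ with $\vect{P}_p\preceq\mu\vect{P}_q$ for all $p,q\in\mathcal{P}$. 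The standard average-dwell-time result for switched linear systems~\cite{ZL-GH:10}, invoked under Assumption~\ref{assump::dwell time exists} with $\tau_{\text{D}}>\ln\mu/\lambda_s$, then yields the transition-matrix estimate~\eqref{eq::phi}. The routine steps are the cross-term cancellation and the citation of the dwell-time theorem; the point that needs care is the \emph{uniformity} of $\lambda_s$ and $\mu$ over the family $\{\vect{E}_p\}_{p\in\mathcal{P}}$. The active/passive patterns number at most $2^N-1$, but within each pattern the positive weights vary over a continuum, so one must argue that $\sup_p\lambda_{\max}(\vect{P}_p)$ and $\inf_p\lambda_{\min}(\vect{P}_p)$ stay finite and positive — e.g.\ by continuous dependence of $\vect{P}_p$ on $\vect{E}_p$ over a compact, zero-avoiding range of admissible weights — or else by restricting $\mathcal{P}$ to a finite set of modes.
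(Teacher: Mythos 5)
Your first claim --- that each $\overline{\vect{A}}_p$ is Hurwitz --- is proved essentially as in the paper: the same change of variables, the same weighted function $V=\tfrac12\overline{\vect{e}}^\top\overline{\vect{e}}+\tfrac12\vect{q}_{2:N}^\top(\lL^{+}\lL^{+})^{-1}\vect{q}_{2:N}$, the same cross-term cancellation giving $\dot V=-\overline{\vect{e}}^\top\vect{T}^\top(\vect{E}_p+\lL)\vect{T}\,\overline{\vect{e}}$, and the same LaSalle argument on $\{\overline{\vect{e}}=\vect{0}\}$; making the positive definiteness of $\vect{T}^\top(\vect{E}_p+\lL)\vect{T}$ explicit via symmetry and Lemma~\ref{lem::B+L Hurwitz} is a harmless sharpening. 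The gap is in the switched part. There you abandon the common Lyapunov function in favor of mode-dependent strict certificates $\vect{P}_p$ with a compatibility factor $\mu$, and invoke the standard average-dwell-time theorem ``with $\tau_{\text{D}}>\ln\mu/\lambda_s$.'' Assumption~\ref{assump::dwell time exists} does not give you that inequality: it only asserts that \emph{some} finite average dwell time $\tau_{\text{D}}>0$ and chatter bound $N_0$ exist (i.e., it excludes Zeno-like switching); it imposes no lower bound on $\tau_{\text{D}}$ relative to constants built from the data. So your argument proves a weaker statement --- stability for sufficiently slow switching --- not the lemma as stated. Moreover, the uniformity problem you flag is real and is not repairable within the paper's hypotheses: in the continuous-time setting $\mathcal{P}$ indexes \emph{all} nonzero diagonal matrices with entries in $\{0\}\cup\realpositive$, a non-compact continuum (weights neither bounded away from zero nor bounded above), and your two fixes (a compact zero-avoiding weight range, or finite $\mathcal{P}$) are extra assumptions; finiteness of the mode set (Assumption~\ref{assump::switching in finite set}) is introduced by the paper only for the discrete-time algorithm.

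The paper's proof is built precisely to avoid both obstructions: it keeps the \emph{common weak} quadratic Lyapunov function from the first part for the whole switched family. Since $V$ does not depend on $p$ and $\dot V\le 0$ along every mode, $V$ can never increase, at switches or between them, so no growth factor $\mu$ arises and no threshold on $\tau_{\text{D}}$ is needed. What a weak common function cannot give by itself is convergence, and that is supplied by the invariance/observability argument: the largest invariant set inside $\{\dot V\equiv 0\}$ is the origin for every mode, which by~\cite[Theorem 2.1]{AZ:82} is exactly the observability condition required by~\cite[Theorem 4]{HJP::04}, an extension of LaSalle's invariance principle to switched linear systems under average dwell time; that theorem then yields uniform exponential stability under Assumption~\ref{assump::dwell time exists} as stated. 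The contrast with the paper's own discrete-time analysis is instructive: there the authors do use a mode-dependent argument, and accordingly must impose the threshold-type condition $\tau_D\geq\tau_D^*$ (Assumption~\ref{assump::slow switching}) together with a finite mode set --- exactly the two extra hypotheses your continuous-time argument would need. To prove the lemma as stated, you should follow the common-weak-Lyapunov-function route.
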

\begin{proof}
Consider the radially unbounded quadratic Lyapunov function $V=\frac{1}{2}\vect{q}_{2:N}^\top(\lL^{+}\lL^{+})^{-1}\vect{q}_{2:N}+\frac{1}{2}\overline{\vect{e}}^\top\overline{\vect{e}}$ (a common Lyapunov function for all the subsystems $\overline{\vect{A}}_{p\in\mathcal{P}}$ of the switched system $\overline{\vect{A}}_{\sigma(t)}$).
Here, note that since $\lL^{+}>0$, then $\lL^{+}\lL^{+}>0$. The Lie derivative of V along the trajectories of internal dynamics of~\eqref{eq::CT_IF_compact form} is
\begin{align}\label{eq::dotV}
  \dot{V}=&-\overline{\vect{e}}^\top\vect{T}^\top(\vect{E}_{p}+\lL)\vect{T}\,\overline{\vect{e}}\,\leq 0, \quad  {p\in\mathcal{P}}.
\end{align}
To establish negative semi-definiteness of $\dot{V}$, we invoke Lemma~\ref{lem::B+L Hurwitz}. So far we have established that $V$ is a weak Lyapunov function. Next, we use the LaSalle invariant principle and~\cite[Theorem 4]{HJP::04} to establish exponential stability of the internal dynamics of~\eqref{eq::CT_IF_compact form}. Let $\mathcal{S}_p=\{(\overline{\vect{e}},\vect{q}_{2:N})\in\real^{N}\times\real^{N-1}|\dot{V}\equiv0\}$ for all $p\in\mathcal{P}$. Given~\eqref{eq::dotV}, we then have $\mathcal{S}_p=\{(\overline{\vect{e}},\vect{q}_{2:N})\in\real^{N}\times\real^{N-1}|\overline{\vect{e}}=0\}$, for all $p\in\mathcal{P}$. Then, it is straightforward to observe that the trajectories of the internal dynamics of~\eqref{eq::CT_IF_compact form} that belong to $\mathcal{S}_{p\in\mathcal{P}}$, should also satisfy  $\vect{q}_{2:N}\equiv\vect{0}$. Therefore, the largest invariant set of the internal dynamics of~\eqref{eq::CT_IF_compact form} in $\mathcal{S}_{p\in\mathcal{P}}$ is the origin. Thus, using~\cite[Theorem 4.4]{khalil:02} all the subsystems $\overline{\vect{A}}_{p\in\mathcal{P}}$ of the switched system $\overline{\vect{A}}_{\sigma(t)}$ are globally asymptotically stable. Moreover, because the all subsystems of the switched system share the common weak quadratic Lyapunov function and the largest invariant set of $\mathcal{S}_{p\in\mathcal{P}}$ contains only the origin, given Assumption~\ref{assump::dwell time exists}, by virtue of~\cite[Theorem 4]{HJP::04} the internal dynamics of~\eqref{eq::CT_IF_compact form}, which is a switched system, is uniformly exponentially stable. Here, we note that according to~\cite[Theorem 2.1]{AZ:82} the origin being the largest invariant set of $\mathcal{S}_p$, for all $p\in\mathcal{P}$, ensures that the observability condition in~\cite[Theorem 4]{HJP::04} is satisfied.
\end{proof}
Given~\eqref{eq::traj_compact} and~\eqref{eq::phi}, we can characterize the tracking performance of  active average consensus algorithm~\eqref{eq:CT_IF_algorithm} as follows.
\begin{thm}\label{thm::CT_IF_algorithm}
Let $\GG$ be a connected undirected graph and suppose Assumption~\ref{assump::dwell time exists} holds.  Then, starting from any $x^i(0),v^i(0)\in\real$, $i\in\VV$ the trajectories of dynamic active average consensus algorithm~\eqref{eq:CT_IF_algorithm} satisfy
\begin{align}\label{eq:CT_IF_error bound}
    |x^i(t)-\ravg(t)&|\leq\kappa_s\ee^{-\lambda_s t}\left\|\left[\begin{smallmatrix}\vect{x}(0)-\vect{\ravg}(0)\\ \lL\vect{v}(0)-\vect{w}(0)\end{smallmatrix}\right]\right\|\nonumber\\
    &\!\!\!\!\!\!\!\!\!\!\!\!\!\!\!+\kappa_s\sum\nolimits_{k=1}^{\bar{k}}\ee^{-\lambda_s(t-t_k)}\left\|\left[\begin{smallmatrix}\Dravg\\\Delta\vect{w}_k \end{smallmatrix}\right]\right\|\Hf(t-t_k)\nonumber\\
    &\!\!\!\!\!\!\!\!\!\!\!\!\!\!\!+\frac{\kappa_s}{\lambda_s}\underset{0\leq\tau\leq t}{\sup}\left\|\left[\begin{smallmatrix}\vect{E}_{\sigma(\tau)}\,\dot{\vectsf{r}}(\tau)-\vect{\dravg}(\tau)\\ -\vect{\dot{\tilde{w}}}(\tau)\end{smallmatrix}\right]\right\|.
\end{align}
\end{thm}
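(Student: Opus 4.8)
The plan is to reduce the scalar tracking error $|x^i(t)-\ravg(t)|$ to the Euclidean norm of the compact-form state $[\overline{\vect{e}}^\top\ \vect{q}_{2:N}^\top]^\top$ and then read the bound off the variation-of-constants formula~\eqref{eq::traj_compact}. First I would note that since $\vect{T}$ is orthonormal we have $\vect{x}-\vect{\ravg}=\vect{T}\,\overline{\vect{e}}$, so $|x^i(t)-\ravg(t)|\leq\|\vect{x}(t)-\vect{\ravg}(t)\|=\|\overline{\vect{e}}(t)\|\leq\|[\overline{\vect{e}}(t)^\top\ \vect{q}_{2:N}(t)^\top]^\top\|$; the $q_1$ coordinate is irrelevant because $\dot q_1=0$ decouples it from the error dynamics. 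Thus it suffices to bound the norm of the $(2N-1)$-dimensional state governed by~\eqref{eq::CT_IF_compact form}.

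Next I would take norms in~\eqref{eq::traj_compact} and use the triangle inequality to split the right-hand side into three pieces: the free response, the continuous forcing term, and the impulsive (Dirac) forcing term. For the free response I would insert the transition-matrix bound~\eqref{eq::phi} from Lemma~\ref{lem::CT_IF_algorithm_internal}, $\|\Phi(t,0)\|\leq\kappa_s\ee^{-\lambda_s t}$, and then relate the initial state to $[\vect{x}(0)-\vect{\ravg}(0);\,\lL\vect{v}(0)-\vect{w}(0)]$: since $\overline{\vect{e}}(0)=\vect{T}^\top(\vect{x}(0)-\vect{\ravg}(0))$ and $\vect{q}_{2:N}(0)=\rR^\top(\lL\vect{v}(0)-\vect{w}(0))$, and since $\vect{T}$ is orthonormal while $\rR$ has orthonormal columns, this change of variables is norm-nonincreasing, producing exactly the first term of~\eqref{eq:CT_IF_error bound}.

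For the continuous forcing term I would pull $\|\overline{\vect{B}}\|=1$ (block-diagonal with blocks $\vect{T}^\top$ and $\rR^\top$, each of operator norm one) out of the integral, apply $\|\Phi(t,\tau)\|\leq\kappa_s\ee^{-\lambda_s(t-\tau)}$, replace the driving vector by its supremum over $[0,t]$, and evaluate $\int_0^t\ee^{-\lambda_s(t-\tau)}\,\mathrm{d}\tau=\tfrac{1}{\lambda_s}(1-\ee^{-\lambda_s t})\leq\tfrac{1}{\lambda_s}$; this yields the third term. The impulsive term is the point that needs care: using the sifting property of $\Df(\cdot-t_k)$, the integral $\int_0^t\Phi(t,\tau)\overline{\vect{B}}[\Dravg;\Delta\vect{w}_k]\Df(\tau-t_k)\,\mathrm{d}\tau$ collapses to $\Phi(t,t_k)\overline{\vect{B}}[\Dravg;\Delta\vect{w}_k]$ for each switching time $t_k\leq t$, i.e. for $k\leq\bar k$, and applying~\eqref{eq::phi} at $\tau=t_k$ together with $\|\overline{\vect{B}}\|=1$ and the triangle inequality over the finite sum gives the second term, the factors $\Hf(t-t_k)$ merely recording that only past switches contribute. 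Summing the three contributions yields~\eqref{eq:CT_IF_error bound}.

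The main obstacle I anticipate is the rigorous handling of the distributional (Dirac) forcing: I must justify the sifting evaluation inside the convolution with the piece-wise-continuous transition matrix, confirm that right-continuity of the signals ($\vect{v}(t_k)=\vect{v}(t_k^+)$) makes the jump magnitudes $\Dravg$ and $\Delta\vect{w}_k$ well-defined, and check that the endpoint convention for $\bar k$ is consistent so that each past switch is counted exactly once. The remaining estimates are routine operator-norm bounds once $\|\overline{\vect{B}}\|=1$ and the exponential estimate~\eqref{eq::phi} are in hand.
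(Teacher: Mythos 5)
Your proposal is correct and follows essentially the same route as the paper's proof: bounding $|x^i-\ravg|$ by the norm of the compact-form state, taking norms in the variation-of-constants formula~\eqref{eq::traj_compact} with the exponential estimate~\eqref{eq::phi} and $\|\overline{\vect{B}}\|\leq 1$, bounding the continuous forcing by its supremum times $1/\lambda_s$, and collapsing the Dirac terms to $\kappa_s\ee^{-\lambda_s(t-t_k)}$ factors (the paper phrases this via ``integration by parts,'' you via the sifting property --- the same computation). The only cosmetic difference is that you bound the initial-condition term via the norm-nonincreasing map $\rR^\top$, whereas the paper uses the equality $\left\|\left[\begin{smallmatrix}\overline{\vect{e}}(0)\\ \vect{q}_{2:N}(0)\end{smallmatrix}\right]\right\|=\left\|\left[\begin{smallmatrix}\vect{x}(0)-\vect{\ravg}(0)\\ \lL\vect{v}(0)-\vect{w}(0)\end{smallmatrix}\right]\right\|$ (valid because $q_1\equiv 0$); both suffice.
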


\begin{proof}
We note that
 $\|\overline{\vect{B}}\|\leq 1$. Then, given~\eqref{eq::traj_compact} and~\eqref{eq::phi}, we can write 
 \begin{align*}
    \left\|\!\left[\begin{smallmatrix}\overline{\vect{e}}(t)\\\vect{q}_{2:N}(t)\end{smallmatrix}\right]\!\right\|
    &\!\leq\! \kappa_s\!\ee^{-\lambda_s t}\!\left\|\!\left[\begin{smallmatrix}\overline{\vect{e}}(0)\\\vect{q}_{2:N}(0)\end{smallmatrix}\right]\!\right\|\!+\!\kappa_s\!\! \int_0^t\!\!\! \ee^{-\lambda_s(t-\tau)}\!\left\|\!\left[\begin{smallmatrix}\vect{E}\,\dot{\vectsf{r}}-\vect{\dravg}\\ -\vect{\dot{\tilde{w}}}\end{smallmatrix}\right]\!\right\|\!\text{d}\tau\\
    &\!\!\!\!\!\!\!\!\!\!\!\!\!\!+\kappa_s \sum\nolimits_{k=1}^{\bar{k}} \int_0^t \ee^{-\lambda_s(t-\tau)}\left\| \left[\begin{smallmatrix}\Dravg\\ \Delta\vect{w}_k\end{smallmatrix}\right]\Df(\tau-t_k)\right\|\text{d}\tau.
\end{align*}
Then, the H\"older inequality is used to bound the second term of the right hand side to arrive at
\begin{align*}
   & \left\|\left[\begin{smallmatrix}\overline{\vect{e}}(t)\\\vect{q}_{2:N}(t)\end{smallmatrix}\right]\right\|
    \leq \kappa_s\ee^{-\lambda_s t}\left\|\left[\begin{smallmatrix}\overline{\vect{e}}(0)\\\vect{q}_{2:N}(0)\end{smallmatrix}\right]\right\|\\
     &\quad \qquad+\frac{\kappa_s}{\lambda_s} \sup_{0\leq\tau\leq t}\left\|\left[\begin{smallmatrix}\vect{E}_{\sigma(\tau)}\,\dot{\vectsf{r}}(\tau)-\vect{\dravg}(\tau)\\ -\vect{\dot{\tilde{w}}}(\tau)\end{smallmatrix}\right]\right\|\\
    &\quad+\kappa_s \sum\nolimits_{k=1}^{\bar{k}}\int_{0}^{t} \ee^{-\lambda_s(t-\tau)}\left\| \left[\begin{smallmatrix}\Dravg\\ \Delta\vect{w}_k\end{smallmatrix}\right]\Df(\tau-t_k)\right\|\text{d}\tau.
\end{align*}
Consequently, with integration by parts, the last term is equivalent to $\kappa_s\sum_{k=1}^{\bar{k}}\ee^{-\lambda_s(t-t_k)}\left\|\left[\begin{smallmatrix}\Dravg\\ \Delta\vect{w}_k \end{smallmatrix}\right]\right\| \Hf(t-t_k)$. Then, since $\vect{T}$ is an orthonormal matrix, we have $\left\|\left[\begin{smallmatrix}\overline{\vect{e}}(0)\\\vect{q}_{2:N}(0)\end{smallmatrix}\right]\right\|=\left\|\left[\begin{smallmatrix}\vect{x}(0)-\vect{\ravg}(0)\\ \lL\vect{v}(0)-\vect{w}(0)\end{smallmatrix}\right]\right\|$ and $\|\vect{x}-\vect{\ravg}\|=\|\overline{\vect{e}}\|$. Finally,  \eqref{eq:CT_IF_error bound} is derived along with the relation 
$|x^i-\ravg|\leq\left \|\left[\begin{smallmatrix}
    \overline{\vect{e}}^\top~&~ \vect{q}_{2:N}^\top
    \end{smallmatrix}\right]^\top\right\|$.
\end{proof}
We note that the first summand of the tracking error bound~\eqref{eq:CT_IF_error bound} is the transient response, which vanishes over time. The second summand is due to the agents alternating between active and passive sets or active agents switching their weights. If the average dwell time $\tau_D$ is large, this error also disappears after a while. The third summand can result in a steady-state error. This error that is expected in dynamic average consensus algorithms, as tracking an arbitrarily fast average signal with zero error is not feasible unless agents have some priori information about the dynamics generating the signals~\cite{SSK-BVS-JC-RAF-KML-SM:19}.  However, the size of this error is proportional to the rate of change of the signals and can be limited by limiting the rate.  We recall that to provide bounded tracking, previous work in~\cite{TY-JDP:14,JDP-TY-GC-SK:15,JDP-TY-JS-ELP:20} require both the reference input signals and their rate of change to be bounded. If the local reference signals are static and the agents do not switch, the agents exponentially converge to $\ravg$ without steady-state error.
Lastly, algorithm \eqref{eq:CT_IF_algorithm} does not require specific initialization. In other words, the convergence property of algorithm \eqref{eq:CT_IF_algorithm} uniformly holds for any initialization. Therefore, as long as the graph stays connected, agents can leave and join the network without effecting the convergence guarantees. 
Figure~\ref{fig::CT_Demo_leaving} demonstrates the performance of algorithm~\eqref{eq:CT_IF_algorithm} in a numerical example. 
\begin{figure}
	\centering
	\includegraphics[trim=0pt  0 0 0,clip,width=0.45\textwidth]{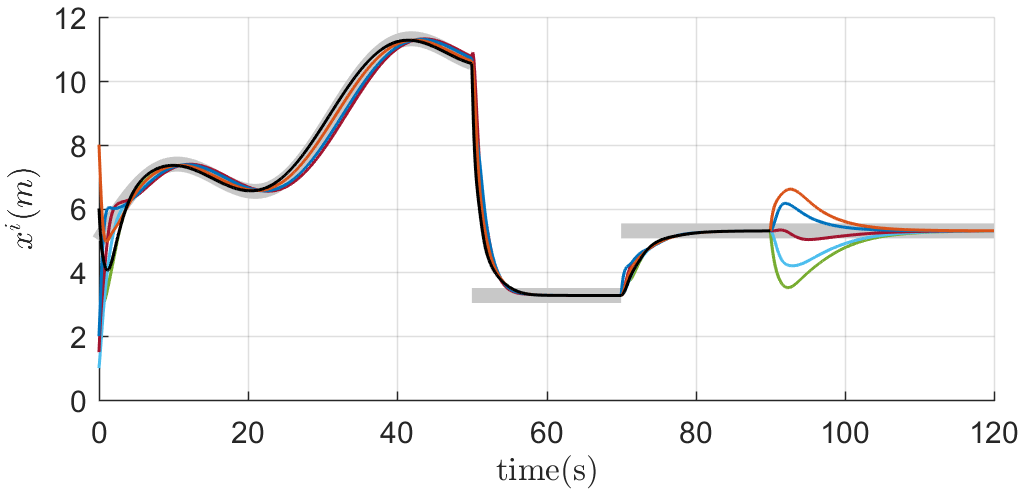}
	\caption{{\small A network of 6 agents with a ring interaction topology executes the active average consensus algorithm \eqref{eq:CT_IF_algorithm}. 
	In time interval $t\in[0,50)$, the observing agents  $\VV_{\textup{a}}(t)=\{1,2,4,6\}$ all have dynamic inputs. The observing agents at $t\in[50,70)$ and $t\in[70,120]$ are, respectively, $\VV_{\textup{a}}(t)=\{2,3,5,6\}$ $\VV_{\textup{a}}(t)=\{3,6\}$ and their observations are static signals. Agent 1 (black line) leaves the network at $t=90$. The gray thick line represents $\ravg(t)$. The agents can track the dynamic $\ravg(t)$ with bounded error in $t\in[0,50)$, while their tracking error is close to zero for the rest of the time as the reference signals are constant after $t=50$. The transient tracking error at time $t=70$ is due to switching of some of agents to the passive mode. This error is captured by the second term in the right-hand side of~\eqref{eq:CT_IF_error bound}. Lastly, agent 1's leaving causes perturbations at $t=90$ but the network still converge to $\ravg(t)$.}}
	\label{fig::CT_Demo_leaving}
\end{figure}

\section{Discrete-Time Dynamic Active Average Consensus}\label{sec::DT_relay consensus algorithm}
We consider a scenario where active agents sample their reference inputs at sampling times $t^s_l=l\delta_s \in\realnonnegative$, $l\in\mathbb{Z}_{\geq0}$, $\delta_s\in\real_{>0}$. The agents can communicate at discrete-times $t_k^c=k\delta_c \in\realnonnegative$, $k\in\mathbb{Z}_{\geq0}$, $\delta_c\in\real_{>0}$. The objective of every agent $i\in\VV$ is to track $\ravg(k)$ (where $k$ is the shorthand for~$t^c_k$). 
To solve the active average consensus problem under this scenario, we propose that every agent $i\in\VV$ implements 
\begin{subequations}\label{eq::DT_IF_algorithm}
\begin{align}
 x^i(k)&=z^i(k)+\eta^i(k) \mathsf{r}^i(k),\label{eq::DT_IF_algorithm_a}\\
 z^i(k+1)&=z^i(k)-\delta_c\eta^i(k)(x^i(k)-\mathsf{r}^i(k))\label{eq::DT_IF_algorithm_b}\\
 &\!\!\!\!\!\!\!\!\!\!\!\!\!\!\!\!\!\!\!\!-\delta_c\sum_{j=1}^N\mathsf{a}_{ij}(x^i(k)-x^j(k))-\delta_c \sum_{j=1}^N\mathsf{a}_{ij}(v^i(k)-v^j(k)),\nonumber\\
 v^i(k+1)&=v^i(k)+\delta_c\sum\nolimits_{j=1}^N\mathsf{a}_{ij}(x^i(k)-x^j(k)).\label{eq::DT_IF_algorithm_c}
\end{align}
\end{subequations}
which is an Euler discretized implementation of the active average algorithm~\eqref{eq:CT_IF_algorithm} with stepszie $\delta_c$.
Here, we assume that if $\delta_s\not=\delta_c$, the agents perform a zero-order hold sampling, so that $\mathsf{r}^i(k)=\mathsf{r}^i(\bar{l})$, $i\in\VV$, where $\bar{l}$ is the latest sampling time step such that $t^s_{\bar{l}}\leq t^c_k$. 
We let $\sigma(k):\integernonnegative\to\mathcal{P}$ be the switching signal of $\vect{E}(k)$, i.e., $\vect{E}(k)=\vect{E}_{\sigma(k)}$. 
Then, we implement the same change of variable as for the continuous-time algorithm~\eqref{eq:CT_IF_algorithm} to write the compact form of \eqref{eq::DT_IF_algorithm} as
\begin{subequations}\label{eq::DT_IF_algorithm_equvqlentform}
\begin{align}
q_1(k+1)&=q_1(k),\\
    \left[\begin{smallmatrix}
    \overline{\vect{e}}(k+1)\\ \vect{q}_{2:N}(k+1)
    \end{smallmatrix}\right]&\!=\!(\vect{I}+\delta_c\overline{\vect{A}}_\sigma) \!\left[\begin{smallmatrix}
    \overline{\vect{e}}(k)\\ \vect{q}_{2:N}(k)
    \end{smallmatrix}\right]\!+\!
        \vect{\overline{B}} \left[\begin{smallmatrix} \Delta\vect{E}\vectsf{r}(k)-\Delta\vect{\ravg}(k) \\ -\Delta\vect{w}(k)\end{smallmatrix}\right],\label{eq::DT_IF_compact form}
\end{align}
\end{subequations}
where $\overline{\vect{A}}_\sigma$ and $\vect{\overline{B}}$ are defined in \eqref{eq::CT_IF_compact form}, $\Delta\vect{E}\vectsf{r}(k)=\Delta\vect{E}(k+1)\vectsf{r}(k+1)-\Delta\vect{E}(k)\vectsf{r}(k)$, $\Delta\vect{\ravg}(k)=\vect{\ravg}(k+1)-\vect{\ravg}(k)$, and $\Delta\vect{w}(k)=\vect{w}(k+1)-\vect{w}(k)$. Then, given $|x^i-\ravg|\leq\left \|\left[\begin{smallmatrix}
    \overline{\vect{e}}^\top~&~ \vect{q}_{2:N}^\top
    \end{smallmatrix}\right]^\top\right\|$, the tracking performance of~\eqref{eq::DT_IF_algorithm} can be understood by studying the convergence properties of~\eqref{eq::DT_IF_compact form}.
    For the discrete-time implementation, the following assumption holds.
    \begin{assump}\label{assump::switching in finite set}
    The switched system \eqref{eq::DT_IF_compact form} switches in a finite set of subsystem, i.e., $\sigma(k):\integernonnegative\to\overline{\mathcal{P}}$, where $\overline{\mathcal{P}}\subset\mathcal{P}$ is a finite subset. 
    \end{assump} The first result below shows that with a proper choice for $\delta_c$ every subsystem $(\vect{I}+\delta_c\overline{\vect{A}}_p)$, $p\in\mathcal{P}$ is Schur.
    However, this is not enough to guarantee that the internal dynamics of~\eqref{eq::DT_IF_compact form} is exponentially stable. To provide such guarantee, following~\cite[Corollary 1]{ZG-HB-YK-MA:02}, we impose the following standard assumption.
\begin{assump}\label{assump::slow switching}{\rm
The average dwell time $\tau_D$ of the switching signal $\sigma(k)$ satisfies $\tau_D\!\geq\!\tau^*_D$, where $\tau^*_D$ is a~stable average dwell time of the switched~system~\eqref{eq::DT_IF_compact form}. }\boxend
\end{assump}
Note that $\tau_D^*$ of the switched system \eqref{eq::DT_IF_compact form} can be computed using the methods introduced in \cite{ZG-HB-YK-MA:02,GJC-CP::06}.

\begin{lem}\label{lem::DT_IF_algorithm_internal}
Let $\GG$ be a connected undirected graph. Then, every subsystem matrix $(\vect{I}+\delta_c\overline{\vect{A}}_p)$, $p\in\overline{\mathcal{P}}$ of~\eqref{eq::DT_IF_compact form} is Schur provided $\delta_c\in(0,\bar{d})$, where $$\bar{d}=\min \left\{\{-2\frac{\text{Re}(\mu_{i,p})}{|\mu_{i,p}|^2}\}_{i=1}^{2N-1}\right\}_{p\in\overline{\mathcal{P}}}$$ and  $\{\mu_{i,p}\}_{i=1}^{2N-1}$ are the set of eigenvalues of   $\overline{\vect{A}}_{p}$. Furthermore, under Assumption~\ref{assump::slow switching} the internal dynamics of~\eqref{eq::DT_IF_compact form} is uniformly exponentially stable, i.e., there always exists  $\kappa_d\in\realpositive$ and $\omega_d\in(0,1)$, such that, the state transition matrix $\Phi(k,j)$ of ~\eqref{eq::DT_IF_compact form} satisfies 
\begin{equation}\label{eq::exp_DT}\|\Phi(k,j)\|\leq\kappa_d\,\omega_d^{(k-j)},  \quad k\geq j\geq 0, k,j\in\integernonnegative 
.\end{equation}
\end{lem}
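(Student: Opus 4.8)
The plan is to leverage the Hurwitz property of each $\overline{\vect{A}}_p$ already established in Lemma~\ref{lem::CT_IF_algorithm_internal} and translate it into a discrete-time Schur condition through the Euler map $\overline{\vect{A}}_p\mapsto\vect{I}+\delta_c\overline{\vect{A}}_p$. First I would observe that, since $\overline{\vect{e}}\in\real^{N}$ and $\vect{q}_{2:N}\in\real^{N-1}$, the matrix $\overline{\vect{A}}_p$ is $(2N-1)\times(2N-1)$ with eigenvalues $\{\mu_{i,p}\}_{i=1}^{2N-1}$, all of which satisfy $\re{\mu_{i,p}}<0$ by the Hurwitz property. The eigenvalues of $\vect{I}+\delta_c\overline{\vect{A}}_p$ are then exactly $1+\delta_c\mu_{i,p}$, so Schurness reduces to requiring $|1+\delta_c\mu_{i,p}|<1$ for every $i$.

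The central computation is elementary: writing $\mu_{i,p}=\re{\mu_{i,p}}+\ii\,\im{\mu_{i,p}}$, I would expand
\[
|1+\delta_c\mu_{i,p}|^2=1+2\delta_c\,\re{\mu_{i,p}}+\delta_c^2|\mu_{i,p}|^2,
\]
so the inequality $|1+\delta_c\mu_{i,p}|^2<1$ is equivalent to $\delta_c\bigl(2\,\re{\mu_{i,p}}+\delta_c|\mu_{i,p}|^2\bigr)<0$. Since $\delta_c>0$, this holds precisely when $\delta_c<-2\,\re{\mu_{i,p}}/|\mu_{i,p}|^2$, a strictly positive threshold because $\re{\mu_{i,p}}<0$. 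Taking the minimum of these thresholds over all eigenvalue indices $i$ and all subsystems $p\in\overline{\mathcal{P}}$ gives $\bar{d}$; hence for any $\delta_c\in(0,\bar{d})$ every eigenvalue of $\vect{I}+\delta_c\overline{\vect{A}}_p$ lies strictly inside the unit disk, so each subsystem is Schur. Assumption~\ref{assump::switching in finite set} ensures $\overline{\mathcal{P}}$ is finite, so this minimum is attained and $\bar{d}>0$, making the choice of $\delta_c$ admissible.

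For the uniform exponential stability of the switched internal dynamics, I would invoke the discrete-time average-dwell-time result~\cite[Corollary 1]{ZG-HB-YK-MA:02}: once every subsystem $\vect{I}+\delta_c\overline{\vect{A}}_p$ has been shown Schur and the switching signal $\sigma(k)$ respects an average dwell time $\tau_D\geq\tau_D^*$ as required by Assumption~\ref{assump::slow switching}, that corollary directly produces constants $\kappa_d\in\realpositive$ and $\omega_d\in(0,1)$ for which the transition matrix $\Phi(k,j)$ obeys the bound~\eqref{eq::exp_DT}.

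The hard part will not be any individual calculation but establishing that $\bar{d}$ is a genuinely positive and \emph{uniform} threshold across the whole switching family: this rests on the finiteness of $\overline{\mathcal{P}}$ (Assumption~\ref{assump::switching in finite set}), so that the infimum over $p$ is both attained and bounded away from zero, together with the fact that no $\mu_{i,p}$ can be purely imaginary—again guaranteed by the Hurwitz property, which always pairs the denominator $|\mu_{i,p}|^2$ with a strictly negative numerator $\re{\mu_{i,p}}$ and thereby prevents the per-eigenvalue bound from degenerating.
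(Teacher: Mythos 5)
Your proposal is correct and follows essentially the same path as the paper: Hurwitz subsystems from Lemma~\ref{lem::CT_IF_algorithm_internal}, a Schur condition on the Euler map $\vect{I}+\delta_c\overline{\vect{A}}_p$ with the per-eigenvalue threshold $-2\,\re{\mu_{i,p}}/|\mu_{i,p}|^2$, a minimum over the finite family $\overline{\mathcal{P}}$, and then \cite[Corollary 1]{ZG-HB-YK-MA:02} under Assumption~\ref{assump::slow switching} for uniform exponential stability of the switched internal dynamics. The only difference is presentational: where the paper cites \cite[Lemma S1]{SSK-BVS-JC-RAF-KML-SM:19} as a black box for the Schur threshold, you carry out that elementary eigenvalue computation explicitly (and correctly), and you make explicit the role of Assumption~\ref{assump::switching in finite set} in ensuring $\bar{d}>0$, which the paper leaves implicit.
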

\begin{proof}
Lemma~\ref{lem::CT_IF_algorithm_internal} ensures that every $\overline{\vect{A}}_{p}$, $p\in\overline{\mathcal{P}}\subset\mathcal{P}$ is a Hurwitz matrix. Then, it follows from~\cite[Lemma S1]{SSK-BVS-JC-RAF-KML-SM:19} that $(\vect{I}+\delta_{c,p}\overline{\vect{A}}_p)$, $p\in\overline{\mathcal{P}}$ is Schur if $\delta_{c,p}\in(0,\bar{d}_p)$, where  $\bar{d}_p=\min\{-2\frac{\text{Re}(\mu_{i,p})}{|\mu_{i,p}|^2}\}_{i=1}^{2N-1}$. As a result, $(\vect{I}+\delta_c\overline{\vect{A}}_p)$, $p\in\overline{\mathcal{P}}$ is Schur if $\delta_c\in(0,\bar{d})$, where $\bar{d}=\min\{\bar{d}_p\}_{p\in\overline{\mathcal{P}}}$. Then, given Assumption~\ref{assump::slow switching}, it follows from~\cite[Corollary 1]{ZG-HB-YK-MA:02} that the zero input dynamics of switched system~\eqref{eq::DT_IF_compact form} is uniformly exponentially stable.  
\end{proof}

The next result characterizes the tracking performance of~\eqref{eq::DT_IF_algorithm}. 
 
\begin{thm}\label{thm::DT_IF_algorithm}
Let $\GG$ be a connected undirected graph and suppose Assumption~ \ref{assump::switching in finite set} and \ref{assump::slow switching} hold.. Then, for any $\delta_c\in(0,\bar{d})$, starting from any $x^i(0),v^i(0)\in\real$, $i\in\VV$, the trajectories of dynamic active average consensus algorithm~\eqref{eq::DT_IF_algorithm} satisfy
\begin{align}\label{eq::DT_IF_error bound}
    |x^i(k)&-\ravg(k)|\leq \kappa_d\omega_d^k\left\|\left[\begin{smallmatrix}\vect{x}(0)-\vect{\ravg}(0)\\ \lL\vect{v}(0)-\vect{w}(0)\end{smallmatrix}\right]\right\|+\nonumber\\
    &\frac{\kappa_d(1-\omega_d^k)}{1-\omega_d}\sup_{0\leq l \leq k-1}\left\|\left[\begin{smallmatrix} \Delta\vect{E}\vectsf{r}(l)-\Delta\vect{\ravg}(l) \\ -\Delta\vect{w}(l)\end{smallmatrix}\right]\right\|.
\end{align}
\end{thm}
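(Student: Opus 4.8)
The plan is to mirror the continuous-time argument of the proof of Theorem~\ref{thm::CT_IF_algorithm}, replacing the integral variation-of-constants formula by its discrete counterpart and the exponential decay bound of Lemma~\ref{lem::CT_IF_algorithm_internal} by the one established in Lemma~\ref{lem::DT_IF_algorithm_internal}. Since~\eqref{eq::DT_IF_compact form} is a discrete-time switched linear system with state transition matrix $\Phi(k,j)$, I would first express its solution through the discrete variation-of-constants formula
\begin{equation*}
  \left[\begin{smallmatrix}\overline{\vect{e}}(k)\\ \vect{q}_{2:N}(k)\end{smallmatrix}\right]
  =\Phi(k,0)\left[\begin{smallmatrix}\overline{\vect{e}}(0)\\ \vect{q}_{2:N}(0)\end{smallmatrix}\right]
  +\sum_{l=0}^{k-1}\Phi(k,l+1)\,\overline{\vect{B}}\left[\begin{smallmatrix} \Delta\vect{E}\vectsf{r}(l)-\Delta\vect{\ravg}(l) \\ -\Delta\vect{w}(l)\end{smallmatrix}\right],
\end{equation*}
taking care that $\Phi(k,l+1)$ (rather than $\Phi(k,l)$) multiplies the input injected at step $l$.

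Next I would take the Euclidean norm of both sides and apply the triangle inequality together with submultiplicativity of the norm, using $\|\overline{\vect{B}}\|\leq1$ (established in the proof of Theorem~\ref{thm::CT_IF_algorithm}, as $\overline{\vect{B}}$ is unchanged) and the exponential bound~\eqref{eq::exp_DT} of Lemma~\ref{lem::DT_IF_algorithm_internal}. This yields
\begin{align*}
  \left\|\left[\begin{smallmatrix}\overline{\vect{e}}(k)\\ \vect{q}_{2:N}(k)\end{smallmatrix}\right]\right\|
  &\leq \kappa_d\,\omega_d^{k}\left\|\left[\begin{smallmatrix}\overline{\vect{e}}(0)\\ \vect{q}_{2:N}(0)\end{smallmatrix}\right]\right\|\\
  &\quad+\kappa_d\sum_{l=0}^{k-1}\omega_d^{\,k-l-1}\left\|\left[\begin{smallmatrix} \Delta\vect{E}\vectsf{r}(l)-\Delta\vect{\ravg}(l) \\ -\Delta\vect{w}(l)\end{smallmatrix}\right]\right\|.
\end{align*}

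The key step is to bound the summation: I would pull the supremum of the input norm over $0\leq l\leq k-1$ outside the sum and recognize the remaining geometric series $\sum_{l=0}^{k-1}\omega_d^{\,k-l-1}=\sum_{m=0}^{k-1}\omega_d^{\,m}=\frac{1-\omega_d^{k}}{1-\omega_d}$, which is finite precisely because $\omega_d\in(0,1)$. This produces the factor $\kappa_d(1-\omega_d^k)/(1-\omega_d)$ multiplying the supremum in the second summand of~\eqref{eq::DT_IF_error bound}, playing the role that $1/\lambda_s$ played in the continuous-time estimate. To finish, I would use the orthonormality of $\vect{T}$ to rewrite $\|[\overline{\vect{e}}(0)^\top\ \vect{q}_{2:N}(0)^\top]^\top\|=\|[(\vect{x}(0)-\vect{\ravg}(0))^\top\ (\lL\vect{v}(0)-\vect{w}(0))^\top]^\top\|$, and then invoke the componentwise inequality $|x^i-\ravg|\leq\|[\overline{\vect{e}}^\top\ \vect{q}_{2:N}^\top]^\top\|$ to arrive at~\eqref{eq::DT_IF_error bound}.

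The bulk of this derivation is routine bookkeeping once Lemma~\ref{lem::DT_IF_algorithm_internal} is available, and the discrete setting is in fact cleaner than the continuous one: the Dirac-delta impulse terms that appear in~\eqref{eq::CT_IF_compact form} have no discrete analogue here, because switching enters solely through the finite differences $\Delta\vect{E}\vectsf{r}$ and $\Delta\vect{w}$. The only point that genuinely requires care is the index alignment in the discrete variation-of-constants sum, ensuring the geometric series closes to $\frac{1-\omega_d^{k}}{1-\omega_d}$ exactly and no off-by-one factor of $\omega_d$ is lost.
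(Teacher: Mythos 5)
Your proposal is correct and follows essentially the same route as the paper's own proof: the discrete variation-of-constants formula with the $\Phi(k,l+1)$ index alignment, the bounds $\|\overline{\vect{B}}\|\leq 1$ and \eqref{eq::exp_DT} from Lemma~\ref{lem::DT_IF_algorithm_internal}, the geometric-series closure to $\frac{\kappa_d(1-\omega_d^k)}{1-\omega_d}$, and the final change of variables via orthonormality of $\vect{T}$ together with $|x^i-\ravg|\leq\left\|\left[\begin{smallmatrix}\overline{\vect{e}}^\top~&~\vect{q}_{2:N}^\top\end{smallmatrix}\right]^\top\right\|$. No gaps; your remark about the absence of Dirac-impulse terms in the discrete setting is also consistent with how the paper structures \eqref{eq::DT_IF_compact form}.
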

\begin{proof} Using standard results for linear systems, trajectories of~\eqref{eq::DT_IF_compact form} are given by 
\begin{align*}
    \left[\begin{smallmatrix}
    \overline{\vect{e}}(k)\\ \vect{q}_{2:N}(k)
    \end{smallmatrix}\right]=&\Phi(k,0)\left[\begin{smallmatrix}
    \overline{\vect{e}}(0)\\ \vect{q}_{2:N}(0)
    \end{smallmatrix}\right]\\
    &+\sum\nolimits_{j=0}^{k-1}\Phi(k,j+1)\vect{\overline{B}} \left[\begin{smallmatrix} \Delta\vect{E}\vectsf{r}(j)-\Delta\ravg(j) \\ -\Delta\vect{w}(j)\end{smallmatrix}\right].
\end{align*}
Then, given that $\|\vect{\overline{B}}\|\leq 1$ and~\eqref{eq::exp_DT} we can write
\begin{align*}
        \left\|\left[\begin{smallmatrix}
    \overline{\vect{e}}(k)\\ \vect{q}_{2:N}(k)
    \end{smallmatrix}\right]\right\|&\leq\kappa_d\, \omega_d^k\left\|\left[\begin{smallmatrix}
    \overline{\vect{e}}(0)\\ \vect{q}_{2:N}(0)
    \end{smallmatrix}\right]\right\|\\
    &\!\!\!\!\!\!\!\!\!\!\!+\kappa_d \sum\nolimits_{j=0}^{k-1}\omega_d^j\sup_{0\leq l \leq k-1}\left\|\left[\begin{smallmatrix} \Delta\vect{E}\vectsf{r}(l)-\Delta\vect{\ravg}(l) \\ -\Delta\vect{w}(l)\end{smallmatrix}\right]\right\|.
\end{align*}
By the sum of geometric sequence, $\kappa_d \sum_{j=0}^{k-1}\omega_d^j=\frac{\kappa_d(1-\omega_d^k)}{1-\omega_d}$. Then, given that $\left\|\left[\begin{smallmatrix}\overline{\vect{e}}(0)\\\vect{q}_{2:N}(0)\end{smallmatrix}\right]\right\|=\left\|\left[\begin{smallmatrix}\vect{x}(0)-\vect{\ravg}(0)\\ \lL\vect{v}(0)-\vect{w}(0)\end{smallmatrix}\right]\right\|$ and $|x^i-\ravg|\leq\left \|\left[\begin{smallmatrix}
    \overline{\vect{e}}^\top~&~ \vect{q}_{2:N}^\top
    \end{smallmatrix}\right]^\top\right\|$, tracking error \eqref{eq::DT_IF_error bound} is established. 
\end{proof}

\section{Distributed containment control via dynamic active average consensus modeling}\label{sec::containment}
In this section, we use the discrete-time dynamic active weighted average consensus algorithm to solve a containment control problem. Consider a group of $M$ ($M$ can change with time) mobile leaders that are moving with a bounded velocity on a $\mathbb{R}^2$ or $\mathbb{R}^3$ space. $\vect{x}_{L,j}(t)$ represents the position vector of leader $j\until{M}$ at time $t\in\real_{\geq0}$. A set of networked follower agents $\VV=\{1,\cdots,N\}$ interacting over a connected graph $\GG$ monitors the leaders. The agents can communicate at discrete-times $t_k^c=k\delta_c \in\realnonnegative$, $k\in\mathbb{Z}_{\geq0}$, $\delta_c\in\real_{>0}$. The agents sample the leaders at sampling times $t^s_l=l\delta_s \in\realnonnegative$, $l\in\mathbb{Z}_{\geq0}$, $\delta_s\in\real_{>0}$. We let $\VV_L^i(t^s_l)$ be the set of leaders observed by agent $i\in\VV$ at sampling time $t^s_l$. Between each sampling time, agent $i\in\VV$ uses $\vect{x}_{L,j}(t)=\vect{x}_{L,j}(t^s_l)$ and $\VV_L^i(t)=\VV_L^i(t^s_l)$, $t\in[t^s_l,t^s_{l+1})$, $l\in\mathbb{Z}_{\geq0}$, $j\in\VV_L^i(t^s_l)$. At every sampling time $t^s_l\in\real_{\geq0}$, we let $\VV_L(t^s_l)$ be the set of the mobile leaders that are observed jointly by the agents $\VV$, i.e.,  $\VV_L(t^s_l)=\cup_{i=1}^N\VV_L^i(t^s_l)$  (see Fig. \ref{fig::containment_scenario}). We let $\VV_a(t^s_l)\subset\VV$ be the set of the active agents that observe at least one leader at $t^s_l$, $k\in\mathbb{Z}_{\geq0}$; we assume that $\VV_{\textup{a}}(t^s_l)\neq\emptyset$. In what follows, the objective is to design a distributed control  that enables each follower $i\in\VV$ to derive its local state $\vect{x}^i$ to asymptotically track 
$\Co(\VV_L(t^s_l))$, the convex hull of the set of the location of the observed leaders $\VV_L(t^s_l)$, with a bounded error $e \geq 0$. To simplify notation, we wrote $\Co(\{\vect{x}_{L,j}(t)\}_{j\in\VV_L(t)})$ as   $\Co(\VV_L(t))$.
We state the objective of the containment control as $\|\vect{x}^i(t_k^c)-\bar{\vect{x}}_L(t_k^c)\|\leq e$, $i\in\VV$, 
where $\bar{\vect{x}}_L(t_k^c)\in\Co(\VV_L(t_k^c))$.
The agents~have no knowledge about the motion model of the leaders. Since followers observe the dynamic leaders collaboratively, the tracking error $e$ is expected as the measurement of each active follower needs time to propagate through the network to the rest of the~followers. 

Our solution builds on the key observation that we make below about the convex hull of a set of points $\{\vect{x}_i\}_{i=1}^{m}$ in an Euclidean space.


\begin{lem}
	\label{lem::cov_noverlap}
	Consider a set of points $\{\vect{x}_i\}_{i=1}^{m}$ in $\real^2$ or $\real^3$. Let $\mathcal{S}_j\neq\emptyset$, $j\in\{1,\cdots,s\}$, be a subset of $\{1,\cdots,m\}$. Let $\bar{\vect{x}}_j=\frac{\sum_{k\in\mathcal{S}_j}\vect{x}_k}{|\mathcal{S}_j|}$, $j\in\{1,\cdots,s\}$. Then, the point
$\bar{\vect{x}}=\frac{\sum_{i=1}^s \bar{\vect{x}}_i}{s}$ is a point in $\Co(\{\vect{x}_{j}\}_{j=1}^m)$.
\end{lem}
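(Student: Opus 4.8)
The plan is to show directly that $\bar{\vect{x}}$ is a convex combination of the original points $\{\vect{x}_i\}_{i=1}^m$, since $\Co(\{\vect{x}_i\}_{i=1}^m)$ is by definition precisely the set of all such convex combinations. First I would observe that each geometric center $\bar{\vect{x}}_j=\frac{1}{|\mathcal{S}_j|}\sum_{k\in\mathcal{S}_j}\vect{x}_k$ is itself a convex combination of the subset $\{\vect{x}_k\}_{k\in\mathcal{S}_j}$: every coefficient equals $1/|\mathcal{S}_j|\geq 0$ and they sum to $|\mathcal{S}_j|/|\mathcal{S}_j|=1$. This is where the hypothesis $\mathcal{S}_j\neq\emptyset$ is used, so that the division is well defined. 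Hence $\bar{\vect{x}}_j\in\Co(\{\vect{x}_i\}_{i=1}^m)$ for every $j\in\{1,\dots,s\}$. Since $\bar{\vect{x}}=\frac{1}{s}\sum_{j=1}^s\bar{\vect{x}}_j$ is in turn a convex combination (with equal weights $1/s$) of points already lying in the convex set $\Co(\{\vect{x}_i\}_{i=1}^m)$, convexity of the hull immediately yields $\bar{\vect{x}}\in\Co(\{\vect{x}_i\}_{i=1}^m)$.

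Equivalently, and perhaps more transparently, I would expand the double sum and collect the coefficient multiplying each $\vect{x}_i$:
\begin{equation*}
\bar{\vect{x}}=\frac{1}{s}\sum_{j=1}^s\frac{1}{|\mathcal{S}_j|}\sum_{k\in\mathcal{S}_j}\vect{x}_k=\sum_{i=1}^m c_i\,\vect{x}_i,\qquad c_i=\frac{1}{s}\sum_{\{j\,:\,i\in\mathcal{S}_j\}}\frac{1}{|\mathcal{S}_j|}.
\end{equation*}
Each $c_i\geq 0$, and summing over $i$ while swapping the order of summation gives $\sum_{i=1}^m c_i=\frac{1}{s}\sum_{j=1}^s\frac{1}{|\mathcal{S}_j|}\,|\mathcal{S}_j|=1$. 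Thus $\bar{\vect{x}}$ is a bona fide convex combination of $\{\vect{x}_i\}_{i=1}^m$ and therefore lies in their convex hull.

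The statement is elementary and I do not anticipate a genuine obstacle; the only point requiring care is that the subsets $\mathcal{S}_j$ are permitted to overlap (a given leader may be observed by several agents), so that a fixed $\vect{x}_i$ can receive contributions from several inner sums. The index-swapping computation above shows that such overlaps are harmless: the point $\vect{x}_i$ simply accumulates a larger nonnegative coefficient $c_i$, and the total still sums to one precisely because the inner weights $1/|\mathcal{S}_j|$ are normalized per subset \emph{before} averaging over the $s$ subsets. Finally, the ambient dimension being $2$ or $3$ plays no role in the argument, which holds verbatim in any $\real^d$.
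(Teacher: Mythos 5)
Your proof is correct and takes essentially the same route as the paper's: each center $\bar{\vect{x}}_j$ is a convex combination of its subset's points and hence lies in $\Co(\{\vect{x}_i\}_{i=1}^m)$, and then $\bar{\vect{x}}$, being a convex combination of points of that convex set, lies in it as well. Your explicit coefficient expansion with $c_i=\frac{1}{s}\sum_{\{j\,:\,i\in\mathcal{S}_j\}}\frac{1}{|\mathcal{S}_j|}$ is not a different argument but a welcome sharpening of what the paper dismisses as ``straightforward to confirm,'' and it makes transparent that overlapping subsets $\mathcal{S}_j$ cause no difficulty.
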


\begin{proof} \vspace{-0.1in} It is straightforward to confirm that $\bar{\vect{x}}_j\in\Co(\{\vect{x}_{i}\}_{i\in\mathcal{S}_j})$, $j\in\{1,\cdots,s\}$ and $\bar{\vect{x}}\in\Co(\{\bar{\vect{x}}_{i}\}_{i=1}^s)$ (recall the definition of the convex hull).
		Moreover, since $\Co(\{\vect{x}_{k}\}_{k=1}^m)$ is a convex set, we note that $\Co(\{\vect{x}_{i}\}_{i\in\mathcal{S}_j})\subset \Co(\{\vect{x}_{k}\}_{k=1}^m)$, $j\in\{1,\cdots,s\}$. Thus, for $i\in\{1,\cdots,s\}$, $\bar{\vect{x}}_i \in \Co(\{\vect{x}_{j}\}_{j=1}^m)$, and $\Co(\{\bar{\vect{x}}_{i}\}_{i=1}^s)\subset \Co(\{\vect{x}_{i}\}_{i=1}^m)$. As a result, $\bar{\vect{x}}\in\Co(\{\vect{x}_{j}\}_{j=1}^m)$.
\end{proof}

\begin{figure}
	\centering
	\includegraphics[trim= 0 3pt 0 0pt,clip,width=0.28\textwidth]{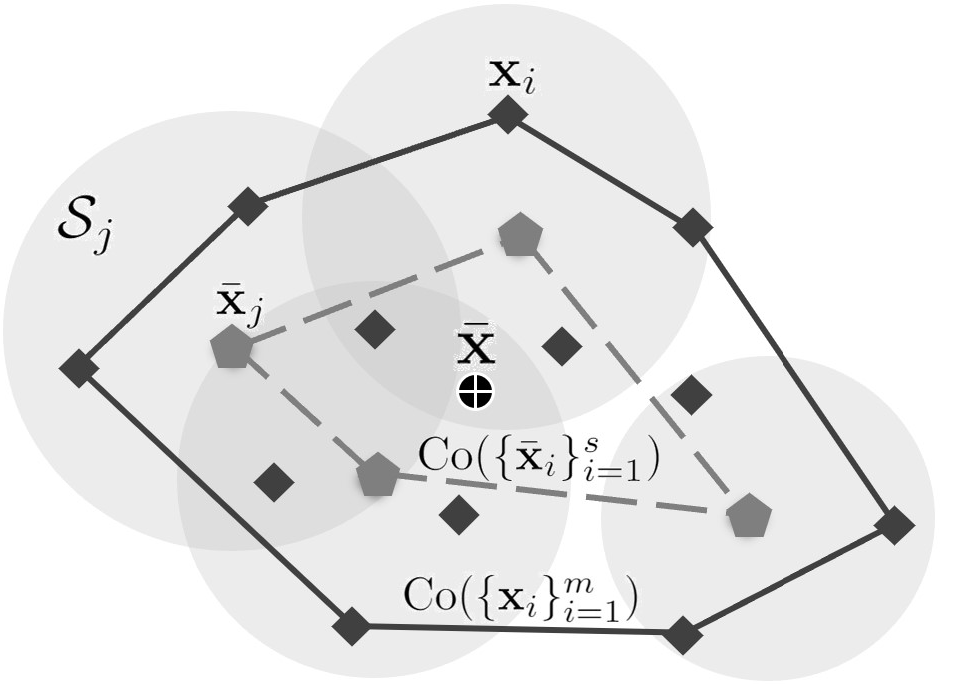}
	\caption{{\small An example graphical demonstration of Lemma~\ref{lem::cov_noverlap}. }}
	\label{fig::lem_convex}
\end{figure}

An example case that demonstrates the result of Lemma~\ref{lem::cov_noverlap} is shown in Fig.~\ref{fig::lem_convex}.
 With the right notation at hand, and the observation made in Lemma~\ref{lem::cov_noverlap}, we are now ready to present in the lemma below  our solution for the containment problem stated above.

\begin{lem}\label{coe::containment_control}
In a containment control problem, let the interaction topology $\GG$ of the followers be a connected graph and suppose that the agents communicate at  $t_k^c=k\delta_c\in\realnonnegative$, $k\in\mathbb{Z}_{\geq0}$. Assume that at each sampling time $t^s_l=l\delta_s \in\realnonnegative$, $l\in\mathbb{Z}_{\geq0}$, we have $\VV_{\textup{a}}(t^s_l)\neq\emptyset$, and the followers are observing the leaders in a zero-order hold fashion, i.e, $\vect{x}_{L,j}(t)=\vect{x}_{L,j}(t^s_l)$, $j\in\VV_L^i(t^s_l)$ and $i\in\VV_{\textup{a}}(t^s_l)$ for $t\in[t^s_l,t^s_{l+1})$. Let
\begin{align}\label{eq:riLs}
\vectsf{r}^i(t^s_l)=\begin{cases}
\frac{\sum_{j\in\VV_L^i(t^s_l)} \vect{x}_{L,j}(t^s_l)}{|\VV_L^i(t^s_l)|},&i\in\VV_{\textup{a}}(t^s_l),\\
~\vect{0},&i\in\VV\backslash\VV_{\textup{a}}(t^s_l).\end{cases}
\end{align} 
Then, $\bar{\vect{x}}_L(t^s_l)=\frac{\sum_{i\in\VV_{\textup{a}}(t^s_l)}\vectsf{r}^i(t^s_l)}{|\VV_{\textup{a}}(t^s_l)|}$ is a point in the convex hull of the leaders $\Co(\VV_L(t^s_l))$. 
Moreover, assume $\|\vect{x}_{L,j}(t^s_{l+1})-\vect{x}_{L,j}(t^s_{l})\|, j\in\{1,\cdots,M\}$, is bounded. If the followers implement active weighted average consensus algorithm \eqref{eq::DT_IF_algorithm} with inputs~\eqref{eq:riLs}, and $\eta^i(t)=1$ if $i\in\VV_{\textup{a}}(t^s_l)$, otherwise, $\eta^i(t)=0$ for $t\in[t^s_l,t^s_{l+1})$, then the tracking error $\|\vect{x}^i(t_k^c)-\overline{\vect{x}}_L(t_k^c)\|$ is~bounded.
\end{lem}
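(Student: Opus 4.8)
The plan is to split the lemma into its two assertions and handle each in turn, leaning on Lemma~\ref{lem::cov_noverlap} for the geometric claim and on Theorem~\ref{thm::DT_IF_algorithm} for the convergence claim. For the convex-hull membership I would simply specialize Lemma~\ref{lem::cov_noverlap}: take the point set $\{\vect{x}_i\}$ to be the observed leader positions $\{\vect{x}_{L,j}(t^s_l)\}_{j\in\VV_L(t^s_l)}$, and for each active agent $i\in\VV_{\textup{a}}(t^s_l)$ let the subset $\mathcal{S}_i$ be its observed index set $\VV_L^i(t^s_l)$. Because $\VV_L(t^s_l)=\cup_i\VV_L^i(t^s_l)$, every observed leader is accounted for, and under this identification the per-agent geometric center of Lemma~\ref{lem::cov_noverlap} is exactly the input $\vectsf{r}^i(t^s_l)$ of \eqref{eq:riLs}, while the average over the active agents is $\bar{\vect{x}}_L(t^s_l)$. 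Lemma~\ref{lem::cov_noverlap} then gives $\bar{\vect{x}}_L(t^s_l)\in\Co(\VV_L(t^s_l))$ immediately.

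For the bounded-tracking claim, the key reduction is that the prescribed weights collapse the active weighted average into the target hull point: with $\eta^i\in\{0,1\}$ the denominator in Problem~\ref{prob::continu} equals $|\VV_{\textup{a}}(t^s_l)|$ and the numerator equals $\sum_{i\in\VV_{\textup{a}}}\vectsf{r}^i$, so $\ravg=\bar{\vect{x}}_L$. Hence running \eqref{eq::DT_IF_algorithm} with inputs \eqref{eq:riLs} is precisely a discrete-time active average consensus instance whose tracking target is a point in $\Co(\VV_L)$. Since the positions live in $\real^2$ or $\real^3$, I would apply the scalar recursion coordinate-wise and invoke Theorem~\ref{thm::DT_IF_algorithm} on each coordinate, then stack the per-coordinate bounds to control $\|\vect{x}^i(t_k^c)-\bar{\vect{x}}_L(t_k^c)\|$. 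In the resulting bound \eqref{eq::DT_IF_error bound} the transient term decays geometrically, and the residual is $\tfrac{\kappa_d}{1-\omega_d}$ times the supremum over steps of the one-step forcing increment $[\Delta(\vect{E}\vectsf{r})-\Delta\vect{\ravg};\ -\Delta\vect{w}]$, so everything reduces to showing these increments are uniformly bounded in $k$.

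That uniform bound is the step I expect to be the main obstacle, so I would split it into two regimes. On steps carrying no mode or observation change, $\vect{E}$ and every $\VV_L^i$ are fixed, so each increment is a difference of averages of leader positions over identical index sets and is therefore controlled by $\max_j\|\vect{x}_{L,j}(t^s_{l+1})-\vect{x}_{L,j}(t^s_l)\|$, which is bounded by hypothesis. At the switching steps the increments additionally contain jumps of $\vect{\ravg}$ and $\vect{w}$, both of which are bounded by the diameter of $\Co(\VV_L(t^s_l))$ since their endpoints lie in the hull, and a jump of $\vect{E}\vectsf{r}$ when an agent toggles between the active and passive sets, which is of the order of the observed geometric center $\vectsf{r}^i$ itself. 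The supremum is thus finite provided the geometric centers remain bounded, and Assumption~\ref{assump::switching in finite set} (finitely many subsystems, hence finitely many jump types) together with the dwell-time condition of Assumption~\ref{assump::slow switching} prevents the forcing from accumulating. Substituting this uniform bound back into \eqref{eq::DT_IF_error bound} yields the claimed bounded tracking error.
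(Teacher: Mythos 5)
Your proposal is correct and follows essentially the same route as the paper, whose entire proof consists of citing Lemma~\ref{lem::cov_noverlap} for the hull-membership claim and Theorem~\ref{thm::DT_IF_algorithm} for the bounded tracking claim. The extra work you do---identifying $\mathcal{S}_i=\VV_L^i(t^s_l)$ in Lemma~\ref{lem::cov_noverlap}, observing that the $0$/$1$ weights make $\ravg=\bar{\vect{x}}_L$, applying the scalar algorithm coordinate-wise, and bounding the forcing increments in \eqref{eq::DT_IF_error bound}---fills in steps the paper leaves implicit, so nothing in your argument departs from the published proof.
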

\begin{proof} $\bar{\vect{x}}_L(t^s_l)\in\Co(\VV_L(t^s_l))$ is true by virtue of Lemma~\ref{lem::cov_noverlap}. The boundedness of the tracking error $\|\vect{x}^i(t_k^c)-\overline{\vect{x}}_L(t_k^c)\|$ follows from the guarantees that Theorem~\ref{thm::DT_IF_algorithm} provides.

\end{proof}

Our solution in Lemma~\ref{coe::containment_control} applies to scenarios like in Fig.~\ref{fig::containment_scenario} where the observation sets of the followers have overlap. It is interesting to note that in case of overlapping observations, $\overline{\vect{x}}_L$ is not the centroid of the leaders. Next, note that by virtue of Theorem~\ref{thm::DT_IF_algorithm}, if the leaders are static or move towards a static configuration, the algorithm convergences exactly to $\overline{\vect{x}}_L$. Otherwise, to ensure that the followers stay in the convex hull while tracking $\overline{\vect{x}}_L$ with some error, we may have to require that the convex hull of the leaders should be sufficiently large.

For demonstration, consider a case that $6$ followers with a ring interaction graph aim to follow the convex hull of $10$ leaders in a two dimensional space. The followers observe the leaders at $1$ Hz according to the scenario described below where the set of active followers changes at $t^s_l\!=\!5$ and $t^s_l\!=\!10$ seconds:  
\begin{itemize}\small{
	\item[-] $0\leq t^s_l < 5$ $0\leq t^s_l < 5$: $\VV_L^1(t^s_l)=\{1,4,6,8\}$, $\VV_L^2(t^s_l)=\{2,4,7,8,10\}$, $\VV_L^3(t^s_l)=\{3,4,5,9\}$, $\VV_L^4(t^s_l)=\emptyset$, $\VV_L^5(t^s_l)=\{1,3,9\}$ and $\VV_L^6(t^s_l)=\emptyset$, 
	\item[-] $5\leq t^s_l < 10$: $\VV_L^1(t^s_l)=\{3,5,6,8\}$,  $\VV_L^2(t^s_l)=\{1,2,7,9,10\}$, $\VV_L^3(t^s_l)=\{3,4,5,9\}$, $\VV_L^4(t^s_l)=\emptyset$, $\VV_L^5(t^s_l)=\{1,3,9\}$ and $\VV_L^6(t^s_l)=\{2,5,7,9\}$, 
	\item[-] $10\leq t^s_l \leq 20$: $\VV_L^1(t^s_l)=\{1,2,5,8\}$,  $\VV_L^2(t^s_l)=\{2,3,6,7,10\}$, $\VV_L^3(t^s_l)=\{3,4,5,9\}$, $\VV_L^4(t^s_l)=\{3,10\}$, $\VV_L^5(t^s_l)=\{1,3,9\}$ and $\VV_L^6(t^s_l)=\{2,5,7,9\}$. }
\end{itemize}
 The communication frequency of the followers is $5$ Hz. 
 Figure~\ref{fig::containment_3D} shows 
that the proposed distributed containment control of Lemma~\ref{coe::containment_control} results in a bounded tracking of the convex hull of the observed leaders.
The interested reader can also find an application study of use of our solution in Lemma~\ref{coe::containment_control} in solving containment control for a group of unicycle followers with continuous-time dynamics in our preliminary work~\cite{CY-KS:19}. There, the algorithm in Lemma~\ref{coe::containment_control} is used as an observer to generate the tracking points for the~followers.

\begin{figure}
	\centering
	\includegraphics[trim=0pt  0 0 0,clip,width=0.45\textwidth]{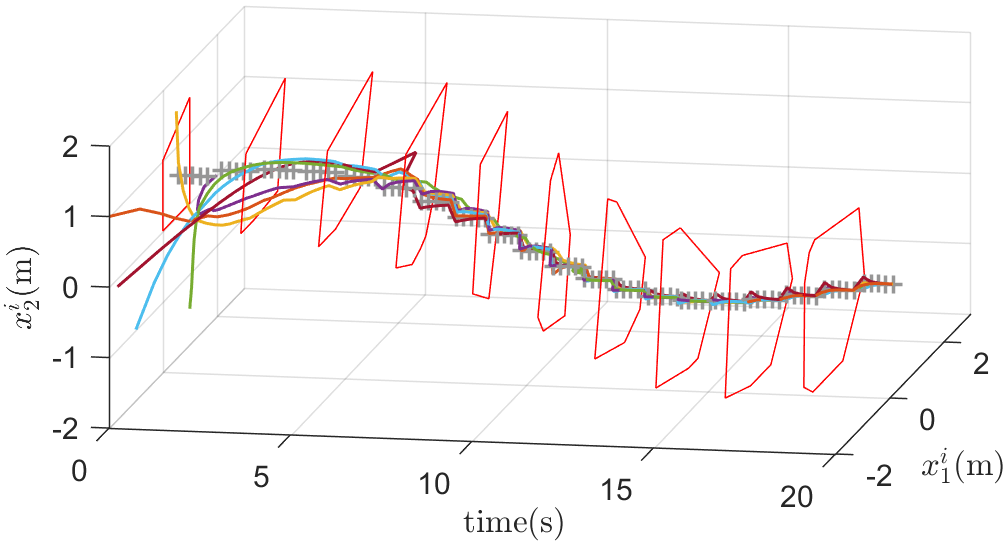}
	\caption{{\small The containment tracking performance of the follower agents while implementing the distributed algorithm~\eqref{eq::DT_IF_algorithm}: the solid curves show the trajectory of $\vect{x}^i$ vs. time, while ``+" show the location of $\bar{\vect{x}}_L(t_k^c)$ of the leaders. The red polygons indicate the convex hull formed by the moving leaders.}}
	\label{fig::containment_3D}
\end{figure}
\section{Conclusion}\label{sec::conclusion}
\vspace{-0.1in}We proposed a dynamic active weighted average consensus algorithm that makes both active and passive agents track the average of the collected reference inputs. The stability and tracking performance were analyzed in both continuous- and discrete-time implementations. We also showed that a containment control can be formulated as an active average consensus problem and solved using our proposed discrete-time algorithm.

\bibliographystyle{ieeetr}%
\vspace{-0.05in}\bibliography{bib/alias,bib/Reference} 

\end{document}